\title{Computational Geometry with Probabilistically Noisy Primitive Operations} 
\author{David Eppstein}{University of California, Irvine \and\url{https://www.ics.uci.edu/~eppstein/}}{eppstein@uci.edu}{}{Research supported in part by NSF grant CCF-2212129.}
\author{Michael T. Goodrich}{University of California, Irvine \and\url{https://www.ics.uci.edu/~goodrich/}}{goodrich@uci.edu}{https://orcid.org/0000-0002-8943-191X}{Research supported in part by NSF grant CCF-2212129.}
\author{Vinesh Sridhar}{University of California, Irvine}{vineshs1@uci.edu}{https://orcid.org/0009-0009-3549-9589}{}
\authorrunning{D. Eppstein, M. T. Goodrich, and V. Sridhar}
\keywords{Computational geometry, noisy comparisons, random walks} 
\begin{document}

\maketitle

\begin{abstract}
Much prior work has been done on designing computational geometry
algorithms that handle input degeneracies, data imprecision, and
arithmetic round-off errors. We take a new approach, inspired by
the \emph{noisy sorting} literature, and study computational geometry
algorithms subject to noisy Boolean primitive operations in which,
e.g., the comparison ``is point $q$ above line $\ell$?'' returns
the wrong answer with some fixed probability. We propose a novel
technique called \emph{path-guided pushdown random walks} that
generalizes the results of noisy sorting. We apply this technique
to solve point-location, plane-sweep, convex hulls in 2D and 3D,
and Delaunay triangulations for noisy primitives in optimal time
with high probability.
\end{abstract}

\section{Introduction}
In 1961, R{\'e}nyi~\cite{renyi} introduced a binary search problem
where comparisons between two values return the wrong answer independently with
probability $p<1/2$; see also, e.g., Pelc~\cite{PELC1989185,PELC200271}.
Subsequently, in 1994, 
Feige, Raghavan, Peleg, and Upfal~\cite{feige1994computing} showed
how to sort $n$ items in $O(n\log n)$ time, with high probability,\footnote{
  In this paper, we take ``with high probability'' (w.h.p.) to mean that the failure probability
  is at most $1/n^c$, for some constant $c\ge 1$.}
in the same noisy comparison model.
Building on this classic work, 
in this paper we study the design of efficient computational geometry algorithms
using Boolean geometric primitives, such as orientation queries or sidedness tests, that randomly return
the wrong answer (independently of previous queries and answers) with probability at most $p<1/2$ (where $p$ is known) and otherwise
return the correct answer.

The only prior work we are aware of
on computational geometry algorithms
tolerant to such non-persistent errors is work by 
Groz, Mallmann-Trenn, Mathieu, and Verdugo~\cite{benoit} on
computing a $d$-dimensional skyline of size $k$ and probability
$1-\delta$ in $O(nd\log(dk/\delta))$ time.
We stress that, as in our work, this noise model is different than 
the considerable prior work on geometric algorithms 
that are tolerant to uncertainty,
imprecision, or degeneracy in their inputs, some of which we review below.
Indeed, the motivation for our study does not come from issues that arise,
for example, from round-off errors, geometric measurement errors, or degenerate geometric configurations.
Instead, our motivation comes from potential applications involving quantum computing, where
a quantum computer is used to answer primitive queries, which return an incorrect
answer with a fixed probability at most $p<1/2$; 
see, e.g.,~\cite{allcock2023quantumtimecomplexitydivide,iwama,math11224707}. 
We see our contribution as a complement to prior work
that applies quantum algorithms in a black-box
fashion to quickly solve geometric
problems, e.g., see~\cite{ambainis2020quantum}.
Our technique gives greater flexibility to algorithms 
that rely on noisy primitives or subprocessors and may be useful in 
further developing this field.
A second motivation for this noise model comes from communication complexity, in which hash 
functions can be used to reduce communication 
costs in distributed algorithms at the 
expense of introducing error. 
Viola models this
behavior with noisy primitives to efficiently construct higher-level,
fault-tolerant algorithms~\cite{Vio-Comb-15}.

A simple observation, made 
for sorting by 
Feige, Raghavan, Peleg, and Upfal~\cite{feige1994computing},
is that we can use any polynomial-time 
algorithm based on correct
primitives by repeating each primitive operation 
$O(\log n)$ times and taking the majority answer as the result.  
This guarantees correctness w.h.p., 
but increases the running time by a $\log n$ factor. 
In this paper, we design
computational geometry algorithms with noisy primitive operations
that are correct w.h.p. without incurring this overhead. 
In \cref{sec:lower-bounds},
we show that the logarithmic overhead is unavoidable for certain problems, 
including closest pairs and detecting collinearities.

\subsection{Related Work}

There is considerable prior work on sorting and searching
with noisy comparison errors.
For example, 
Feige, Raghavan, Peleg, and Upfal~\cite{feige1994computing} 
show that one can sort
in $O(n\log n)$ time w.h.p.~with
probabilistically noisy comparisons.
Dereniowski, Lukasiewicz, and Uznanski~\cite{dereniowski2023} study
noisy binary searching, deriving time bounds for 
constant factors involved.
A similar study has also been done by
Wang, Ghaddar, Zhu, and Wang~\cite{wang24}. 
Klein, Penninger, Sohler, and Woodruff solve linear programming in 2D under a different model of primitive errors~\cite{tolerant}, in which errors persist regardless of whether primitives are recomputed.

Other than the work by Groz {\it et al.}~\cite{benoit} mentioned above,
we are not aware of prior work on computational
geometry algorithms with random, independent noisy primitives. Nevertheless, considerable prior work has designed algorithms
that can deal with input degeneracies, data imprecision, and 
arithmetic round-off errors.
For example,
several researchers have studied general methods for dealing with 
degeneracies in inputs to geometric algorithms, 
e.g., see~\cite{yap2006symbolic,edelsbrunner1990simulation}.
Researchers have designed
algorithms for geometric objects with imprecise positions,
e.g., see~\cite{loffler2009data,loffler2008delaunay}.
In addition, significant prior work has dealt with
arithmetic round-off errors and/or performing geometric primitive operations
using exact arithmetic,
e.g., see~\cite{fortune1996static,mehlhorn2001exact}.
While these prior works have made important contributions to 
algorithm implementation in 
computational geometry, they
are orthogonal
to the probabilistic noise model we consider in this paper.

Emamjomeh-Zadeh, Kempe, and Singhal~\cite{emamjomeh2016deterministic}
and Dereniowski, Tiegel, Uzna{\'n}ski, 
and Wolleb-Graf~\cite{dereniowski2018framework} explore a generalization of noisy binary search to graphs,
where one vertex in an undirected, positively weighted graph is a target. 
Their algorithm identifies the target by adaptively querying vertices. A query to a node $v$ either determines 
that $v$ is the target or produces an edge out of $v$ 
that lies on a shortest path from $v$ to the target.
As in our model, the response to each query is wrong independently
with probability $p<1/2$.
This problem is different than the graph search we study
in this paper, however, 
which is better suited to applications in computational geometry.
For example, in computational geometry applications,
there is typically a search path, $P$, that needs to be traversed to a 
target vertex, but the search path $P$ need not be a shortest path.
Furthermore, in such applications,
if one queries using a node, $v$, 
that is not on $P$, it may not even be possible 
to identify a node adjacent to $v$ that is closer to the target vertex.

Viola~\cite{Vio-Comb-15} uses a technique similar to ours to handle errors in a communication protocol. In one problem studied by Viola, two participants with $n$-bit values seek to determine which of their two values is largest. This can be done by a noisy binary search for the highest-order differing bit position. Each search step performs a noisy equality test on two prefixes of the inputs, by exchanging single-bit hash values. The result is an $O(\log n)$ bound on the randomized communication complexity of the problem. Viola uses similar protocols for other problems including testing whether the sum of participant values is above some threshold. The noisy binary search protocol used by Viola directs the participants down a decision tree, with an efficient method to test whether the protocol has navigated down the wrong path in order to backtrack. One can think of our main technical lemma as a generalization of this work to apply to any DAG. 

\subsection{Our Results}
This work centers around our novel technique, 
\emph{path-guided pushdown random walks}, described in \cref{sec:random}. 
It extends
noisy binary search in two ways: it can handle
searches where the decision structure of comparisons is in general
a DAG, not a binary tree, and it also correctly returns a non-answer
in the case that the query value is not found. These two traits
allow us to implement various geometric algorithms in the noisy
comparison setting.

However, to apply path-guided pushdown random walks, one must design an oracle that, given a sequence of comparisons, can determine if one of them is incorrect in constant time. Because different geometric algorithms use different data structures and have different underlying geometry, we must develop a unique oracle for each one. The remainder of the paper describes  noise-tolerant implementations with optimal running times for plane-sweep algorithms, point location,
convex hulls in 2D and 3D, and Delaunay triangulations. We also present a dynamic 2D hull construction that runs in $O(\log^2 n)$ time w.h.p. per operation. See \cref{tab:summary} for a summary of our results.

Our algorithms to construct the trapezoidal decomposition, Delaunay triangulation, and 3D convex hull are adaptions of classic randomized incremental constructions. 
A recent paper by Gudmundsson and Seybold~\cite{gudmundsson2022tail} published at SODA 2022 claimed that such constructions produce search structures of depth $O(\log n)$ and size $O(n)$ w.h.p. in $n$, implying that the algorithms run in $O(n\log n)$ time w.h.p.
Unfortunately, the authors have contacted us privately to say that their analysis of the size of such structures contains a bug. 
In this paper, we continue to use the standard expected time analysis. 

\begin{table*}[t]
  \centering
  \begin{tabular}{|c|c|c|}
    \hline
    Algorithm & Runtime & Section \\
    \hline
    Trapezoidal Map & $\Theta(n\log n)^{\star}$ & \cref{sec:TrapMap} \\
    Trapezoidal Map with $k$ Crossings & $O((n+k)\log n)$ & \cref{sec:TrapMapCrossings}\\
    2D Closest Points & $\Theta(n\log n)^\dagger$ & \cref{sec:closest}\\
    2D Convex Hull & $\Theta(n\log n)^\dagger$ & \cref{sec:2DHull}\\
    Dynamic 2D Convex Hull & $O(\log^2 n)$ per update & \cref{sec:Dyn2D}\\
    3D Convex Hull & $\Theta(n\log n)^{\star\dagger}$ & \cref{sec:3D-hull}\\
    Delaunay Triangulation & $\Theta(n\log n)^{\star}$ & \cref{sec:DT}\\
    \hline
  \end{tabular}
  \caption{Our main results in the noisy setting. All algorithms succeed w.h.p. in $n$. Runtimes marked with a $^\star$ are in expectation. Runtimes marked with a $^\dagger$ are optimal despite having faster algorithms in the non-noisy setting.}
  \label{tab:summary}
\end{table*}

\section{Preliminaries}
\paragraph*{Noisy Boolean Geometric Primitive Operations}
\label{sec:prelim-noise}

Geometric algorithms typically rely on one or more Boolean 
geometric primitive operations that are assumed to be computable in $O(1)$ time. For example, in a Delaunay triangulation algorithm, this may be determining if some point $p$ is located in some triangle $\Delta$; in a 2D convex hull algorithm, this may be an orientation test; etc. 
Here we assume that a geometric algorithm
relies on primitive operations that each outputs a 
Boolean value and has a fixed probability $p < 1/2$ of outputting the wrong answer. 
As in earlier work for the sorting problem~\cite{feige1994computing}, 
we assume non-persistent errors, in which each primitive test can be
viewed as an independent weighted coin flip. 

In each section below, we specify the Boolean geometric primitive operation(s)
relevant to the algorithm in consideration. We note here
that, while determining whether two objects $a$ and $b$ have equal
value may be a noisy operation, determining whether two pointers
both point to the same object $a$ is not a noisy operation.
We also note that manipulating and comparing non-geometric data,
such as pointers or metadata of nodes in a tree (e.g., for rotations), 
are not noisy operations.
This is true even if the tree was constructed
using noisy comparisons.

\paragraph*{The Trivial Repetition Strategy}
\label{sec:repetition}
As mentioned above,
Feige, Raghavan, Peleg, and Upfal~\cite{feige1994computing} observed
in the context of the noisy sorting problem
that by simply repeating a primitive operation $O(\log n)$ times
and choosing the decision returned a majority of the time, one can
select the correct answer w.h.p. The constant in this logarithmic
bound can be adjusted as necessary to make a polynomial number of
correct decisions, w.h.p., as part of any larger algorithm.  Indeed,
this naive method immediately implies $O(n\log^2 n)$ algorithms
for a majority of the geometric constructions we discuss below. The goal of
our paper is to improve this to an optimal running time using the
novel technique described in \cref{sec:random}.

\paragraph*{General Position Assumptions}
For the sake of simplicity of expression,
we make standard general position assumptions throughout this paper: no two segment endpoints in a trapezoidal decomposition and no two events in a plane sweep algorithm have the same $x$-coordinate, 
no three points lie in a line and no four points lie on a plane for 2D and 3D convex hulls respectively, and no four points lie in a circle for Delaunay triangulations. Applying perturbation methods~\cite{mehlhorn2006reliable} or implementing special cases in each algorithm would allow the relaxation of these assumptions.

\section{Path-Guided Pushdown Random Walks}
\label{sec:random}
\begin{figure}[htb!]
    \centering
    \includegraphics[width=0.4\linewidth]{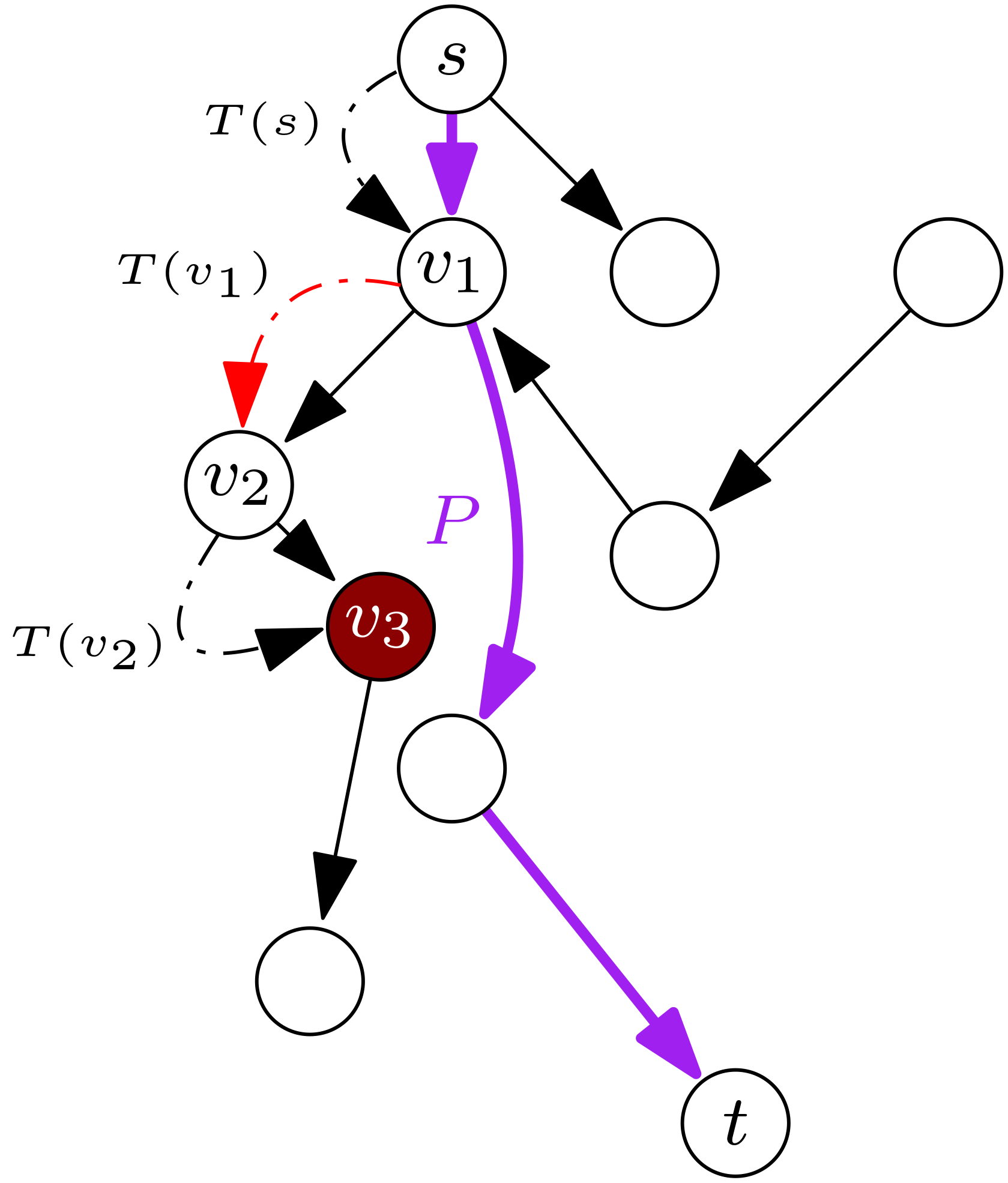}
    \caption{Here we show a sample execution of 
    path-guided pushdown random walks in
    some DAG $G$. 
    The transition oracle
    acts \emph{arbitrarily} when it lies, so we may
    end up with the sequence shown here, a correct
    move prior to our current node but
    an incorrect move to $v_2$. 
    To apply path-guided pushdown random walks, we must be able to
    determine whether we are on $P$ in $O(1)$ comparisons, no matter where
    we are in $G$.}
    \label{fig:random-walk}
\end{figure}

In this section, we provide an analysis tool that we use repeatedly in
this paper and which may be of independent interest (e.g., to analyze
randomized routing protocols).
Specifically, we introduce \emph{path-guided pushdown random walks}, which 
are related to biased random walks on a graph 
(e.g., see~\cite{azar1996biased,fronczak2009biased}) and
generalize the noisy
binary search problem~\cite{feige1994computing,geissmann2019optimal}. See \cref{fig:random-walk} for a depiction of path-guided pushdown random walks.

A path-guided pushdown random walk is defined in terms of a directed
acyclic graph (DAG), $G$, 
that has a
starting vertex, $s$, a target vertex, $t$, and a path,
$P=(s,v_1,v_2,\ldots,t)$, from $s$ to $t$, in $G$
(we assume $s\not=t$).
We start our walk from the start vertex, $v\leftarrow s$, and we use a
stack, $S$, which initially contains only $s$, and a transition
oracle, $T(v)$, to determine our walk.
For each vertex, $v$, during our walk, we consult
the transition oracle, $T(v)$, which first tells 
us whether $v\in P$ and if so, then $T(v)$ tells us the next vertex in $P$ to
visit to make progress towards $t$. $T(v)$ can return $v$, which means
we should stay at $v$, e.g., if $v=t$.

Our model allows $T$ to ``lie.''
We assume a
fixed error probability,\footnote{The threshold of $1/15$
  simplifies our proof. 
  We can tolerate any higher error probability bounded below $1/2$,
  by repeating any query a constant number of times
  and taking the majority answer.}
$p_e<1/15$,
such that $T$ gives the correct answer with probability $1-p_e$, 
independently each time we query $T$.
With probability $p_e$, $T(v)$ can lie, i.e., $T(v)$
can indicate ``$v\in P$'' when $v\not\in P$,
$T(v)$ can indicate ``$v\not\in P$'' when $v\in P$, or $T(v)$ can return
a ``next'' vertex that is not an actual next 
vertex in~$P$ (including returning $v$ itself even though $v\not=t$).
Importantly, this next vertex must be an outgoing neighbor of $v$.
This allows us to maintain the 
invariant that $S$ holds an actual path in~$G$ (with repeated vertices).
Our traversal step, for current vertex $v$, is as follows:

\begin{itemize}
\item
If $T(v)$ indicates that $v\not\in P$ (and $v\not=s$), 
then we set $v\leftarrow S.{\sf pop}()$,
which may be $v$ again, for the next iteration.
This is a backtracking step.
\item
If $T(v)$ indicates that $v\in P$,
then let $w$ be the vertex indicated by $T(v)$ as next in $P$,
such that $v=w$ or $(v,w)$ is an edge in $G$.\footnote{
  If $w\not=v$ and $(v,w)$ is not an edge in $G$, 
  we immediately reject this call to $T$ and repeat the call to~$T$.}
\begin{itemize}
\item
If $v=w$, then we call $S.{\sf push}(v)$ and repeat the iteration with this
same $v$, since this is evidence we have reached the target, $t$.
\item
Else ($v\not=w$) if $v=S.{\sf top}()$, 
then we set $v\leftarrow S.{\sf pop}()$. That is, in this case, we don't
immediately transition to $w$, but we take this as evidence that we should
not stay at $v$, as we did in the previous iteration. This is another 
type of backtracking step.
\item
Otherwise, we call $S.{\sf push}(v)$ and set $v\leftarrow w$ for 
the next iteration.
\end{itemize}
\end{itemize}

We repeat this step until we are confident we can stop, which occurs when enough copies of the same vertex occur at the top of the stack.

\begin{theorem}
\label{thm:path}
Given an error tolerance, $\varepsilon=n^{-c}$ for $c>0$,
and
a DAG, $G$, with a path, $P$, from a vertex, $s$, to a distinct vertex, $t$,
the path-guided pushdown random walk in $G$ starting from $s$ will terminate
at $t$ with probability
at least $(1-\varepsilon)$ after $N=\Theta(|P|+\log (1/\varepsilon))$
steps, for a transition oracle, $T$, with error probability $p_e< 1/15$.
\end{theorem}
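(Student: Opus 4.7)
The plan is to define a potential function $\Phi_k$ that measures the ``confirmed progress'' along $P$ at step $k$, establish that it has a strictly positive expected drift per step under the bound $p_e < 1/15$, and then apply Azuma--Hoeffding to conclude concentration.

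First, I would define $\Phi_k$ as the largest index $j$ such that the bottom $j+1$ entries of the stack $S_k$ coincide with the prefix $s, v_1, \ldots, v_j$ of $P$. By the traversal rule, each iteration performs a single push or pop, so $|\Phi_{k+1}-\Phi_k| \le 1$ always. Moreover, once $\Phi_k = |P|$, the top of the stack is $t$, and from that point accumulating enough repeated copies of $t$ at the top lets the algorithm declare success.

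Next, I would prove $\mathbb{E}[\Phi_{k+1}-\Phi_k \mid \mathcal{F}_k] \geq \delta$ for some absolute constant $\delta > 0$ whenever $\Phi_k < |P|$, by case analysis on (i) whether the current vertex $v$ sits on the ``frontier'' $v_{\Phi_k+1}$ of the confirmed prefix, (ii) whether $v$ is a garbage vertex above the frontier, and (iii) whether $v$ equals the stack top. For each case I'd enumerate the four possible transition rules (not-in-$P$ pop, stay-push, top-pop, push-and-move) and tabulate the change in $\Phi_k$, using that a correct oracle response occurs with probability $\geq 1-p_e$ and an incorrect one with probability $\leq p_e$. The threshold $1/15$ is forced by the worst case in which a lie at the frontier can push garbage on top while only a subsequent correct ``not-in-$P$'' pop of that garbage restores $\Phi$; the resulting coefficient balance requires $p_e$ strictly below a small constant, for which $1/15$ suffices.

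Having established the drift, I would apply Azuma--Hoeffding to the martingale $M_k = \Phi_k - \delta k$, which has bounded differences. This gives $\Pr[\Phi_N < |P|] \le \exp(-\Omega(N - |P|/\delta))$, and setting $N = C(|P|+\log(1/\varepsilon))$ for $C$ large enough bounds this by $\varepsilon/2$. To confirm termination at $t$, I would append $O(\log(1/\varepsilon))$ extra iterations in which the oracle at $t$ returns the ``stay'' answer with probability $\geq 1-p_e$, and apply a Chernoff bound to show that $t$ accumulates sufficiently many self-copies on the stack while no non-target vertex can accumulate the same number via a long run of lies, except with probability $\leq \varepsilon/2$.

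The main obstacle will be the case analysis for the drift, especially in the ``garbage'' configurations where the stack top is a non-path vertex: there, all four transition rules can fire, and lies can both push additional garbage and suppress the correct pops that would restore progress. Grouping the rules so that the expected number of pops from correct answers dominates the expected number of pushes from incorrect answers, uniformly across all stack configurations, is what pins down the threshold on $p_e$.
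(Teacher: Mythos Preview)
Your potential $\Phi_k$ does not have the per-state positive drift you need for Azuma--Hoeffding. Consider the configuration reached after a single lie at $v_j$: the current vertex $v$ is an off-path child of $v_j$, the stack is $S=[s,v_1,\dots,v_j]$, and $\Phi_k=j$. A \emph{truthful} oracle call now reports ``$v\notin P$'' and pops, yielding $v\leftarrow v_j$, $S=[s,\dots,v_{j-1}]$, and $\Phi_{k+1}=j-1$. Every lying call in this state instead pushes $v$ (either a spurious ``stay'' or a spurious ``move''), leaving the bottom $j{+}1$ stack entries intact and $\Phi$ unchanged. Hence $\mathbb{E}[\Delta\Phi\mid\mathcal{F}_k]\le-(1-p_e)<0$ here. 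More generally, once the walk is several garbage pushes above the confirmed prefix, truthful pops strip garbage without touching the prefix and $\Delta\Phi=0$ throughout the excursion; your case~(ii) never contributes positive drift. So $M_k=\Phi_k-\delta k$ is not a supermartingale and the Azuma step fails as stated.

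The paper avoids this by not tracking stack-based progress at all. It observes that every truthful call either advances one edge along $P$ or exactly cancels one earlier lie (pops a spurious ``stay'' copy, or backs out one step of an off-path excursion). Consequently, if $X$ is the number of lies among the first $N$ calls, the inequality $(N-X)-X\ge|P|$ already forces the walk to sit at $t$, and a single Chernoff bound on the i.i.d.\ Bernoulli sum $X$ with mean $p_eN$ gives the high-probability statement; the constant $1/15$ falls out of this Chernoff calculation rather than from a worst case in a transition table. If you want to repair the drift approach, the potential must also subtract the height of garbage (and spurious ``stay'' copies) sitting above the confirmed prefix---and once you do that, the resulting quantity is, up to an additive constant, exactly ``truthful calls so far minus lies so far,'' which is the paper's invariant.
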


The proof of \cref{thm:path} can be found in \cref{sec:dag-lemma-pf}. In short, we show that each ``good'' action by $T$ can undo any ``bad'' action by $T$. By applying a Chernoff bound, we then show that, w.h.p., we reach $t$ after $N$ steps and terminate after another $O(\log (1/\varepsilon))$ steps. The requirement that $\varepsilon$ be polynomially small is used to ensure that premature termination is unlikely.  For the remainder of the paper, we assume that $\varepsilon = 1/n^c$ for some constant $c>0$. 
A majority of the algorithms below invoke path-guided pushdown random walks at most $O(n)$ times. 
Taking the union bound over all $O(n)$ invocations still shows that path-guided pushdown random walks 
fails with probability $O(1/n^{c-1})$. 
Thus, for $c \geq 2$, all invocations succeed w.h.p. 
This choice of $\varepsilon$ adds $O(\log n^c) = O(\log n)$ to the time for each walk. 
However, in our applications it can be shown that $|P| = O(\log n)$, 
so this does not change the asymptotic complexity of the operation.

\section{Noisy Randomized Incremental Construction for Trapezoidal Maps}\label{sec:TrapMap}

\begin{figure}[t]
\centering
\begin{subfigure}{.5\textwidth}
  \centering
  \includegraphics[width=.9\linewidth]{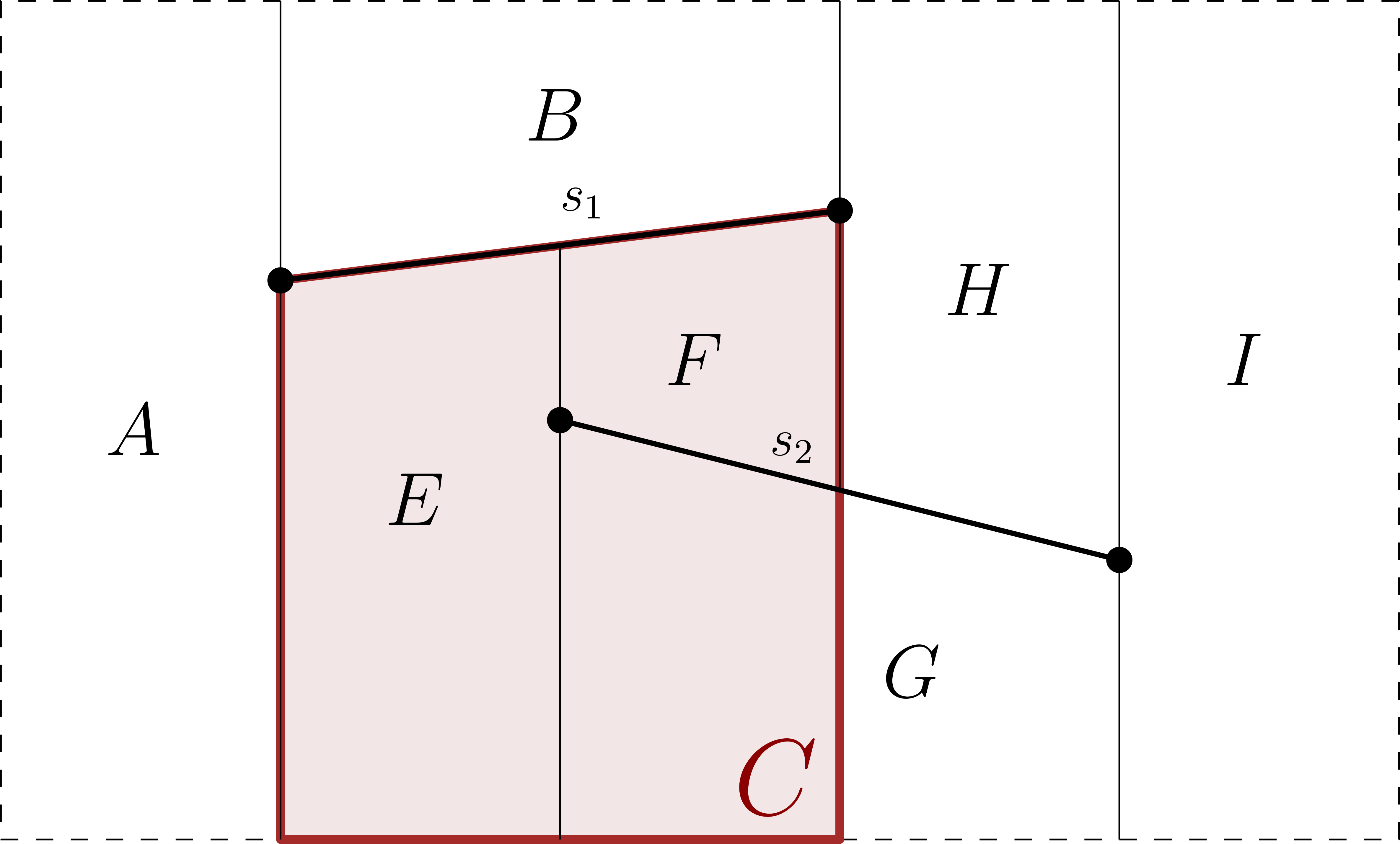}
\end{subfigure}%
\begin{subfigure}{.5\textwidth}
  \centering
  \includegraphics[width=.9\linewidth]{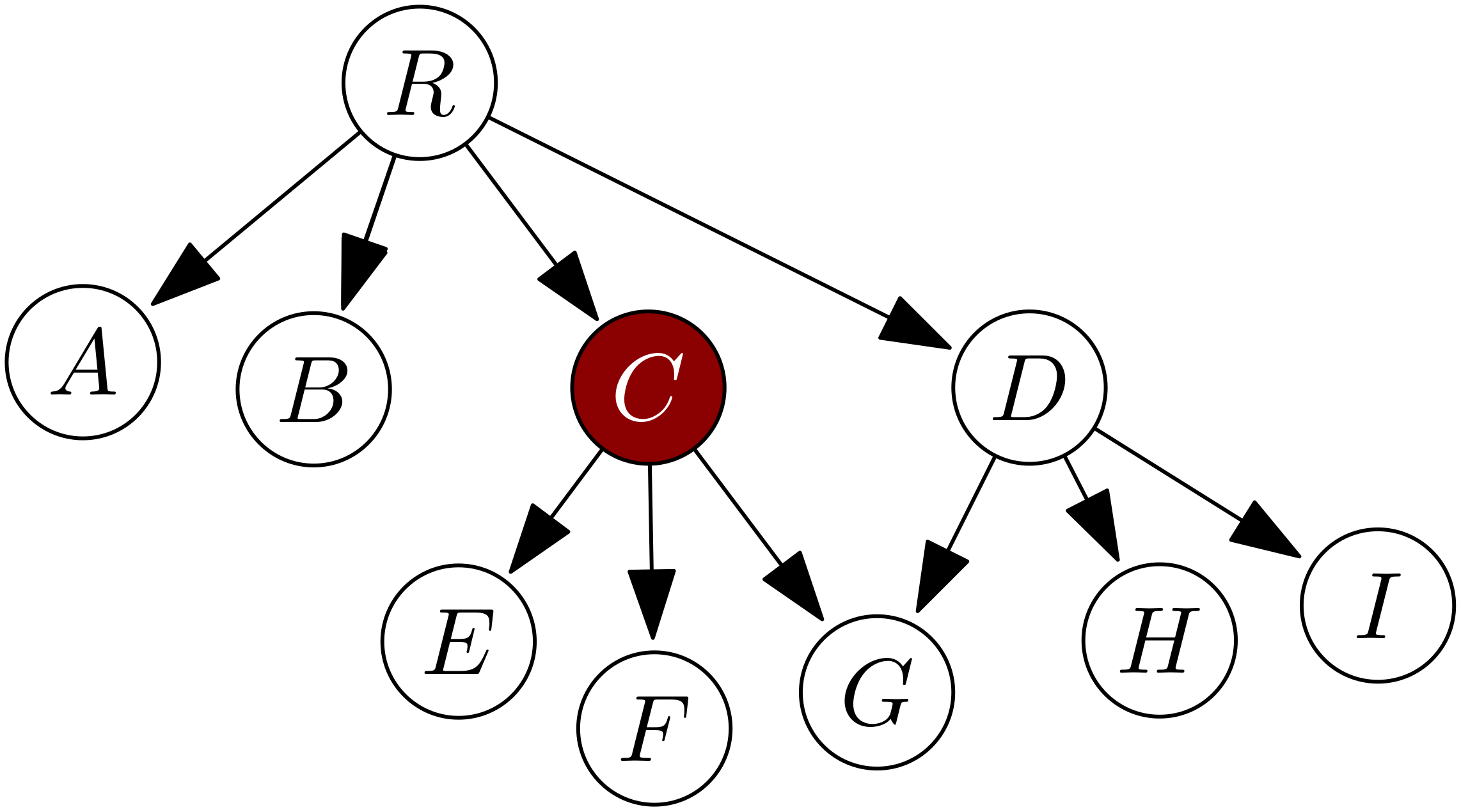}
\end{subfigure}
\caption{
  An instance of a path-guided pushdown random walk in 
  the trapezoidal map history DAG.
  $R$ represents the initial bounding box, the leaf nodes represent currently visible trapezoids.
  The remaining nodes represent destroyed trapezoids.
  We are at node $C$ in a path-guided pushdown random walk, 
  so we test if our query point
  lies in trapezoid $C$, the shaded region in the
  decomposition.
}
\label{fig:trap-map-oracle}
\end{figure}

In this section, we show how a history DAG in 
a randomized incremental construction (RIC) algorithm
can be used as the DAG of \cref{thm:path}.
%
Suppose we are given a set of $n$ non-crossing line segments in the plane
and wish to construct their trapezoidal decomposition, $\mathcal{D}$.
We outline below 
how the history
DAG for a (non-noisy) RIC 
algorithm for constructing $\mathcal{D}$
can be used as the DAG of \cref{thm:path} (see also \cref{sec:RIC-expl}).
We show that, even in the noisy setting, such a DAG can
perform point location in the
history of trapezoids created and/or destroyed 
within the algorithm to locate the endpoints of 
successively inserted line segments.
In particular,
we consider a history DAG 
where any trapezoid destroyed in a given iteration points to the new trapezoids that replace it.
This is in contrast to another variant of a history DAG that
represents the segments themselves as nodes in
the DAG~\cite{decomputational}, which seems less usable when
primitives are noisy.
To make our RIC algorithm noise-tolerant, we must solve two issues. 
The first is to navigate the history DAG in $O(\log n)$ time w.h.p.; 
the second is to walk along each segment to 
merge and destroy trapezoids when it is added.

To navigate down the history DAG, we apply a path-guided pushdown random walk. 
To do so we must test, with a constant number of operations, 
whether we are on the path that a non-noisy algorithm would 
search to find the query endpoint $q$ of a new segment, $s_i$. 
See \cref{fig:trap-map-oracle} for an example of this process. 
Each node of our history DAG represents a trapezoid 
(either destroyed if an internal node or current if a leaf node). 
Importantly, each of the at most four children 
of a node cover the parent's trapezoid, but do not overlap. 
Therefore, we can define a unique path, $P$ w.r.t 
query $q$, to be a sequence of trapezoids 
beginning at the root of the history DAG and ending at the leaf-node
trapezoid containing $q$. 
Each node on the path is the unique child of the 
previous node that contains $q$. 
Thus, all our transition oracle must do to determine if we are 
on the correct path is to check 
whether $q$ is contained by the trapezoid mapped to our current node $v$.
%
If this test succeeds, the oracle determines (rightly or wrongly) that $v$ is on a valid path, and it proceeds to compare~$q$ against the segment whose addition split trapezoid $v$ in order to return the next node of the walk. 
If one or more of these tests fails, the oracle says that $v$ is not on a valid path. 
Let $\ell_i$ be the sum of the lengths of the two unique paths corresponding to the endpoints of $s_i$. If we set our error probability for path-guided pushdown random walks to be w.h.p. in $n$, the point-location cost to insert the $i$th segment is $O(\ell_i + \log n)$. 

For the second issue, suppose there are $d_i$ trapezoids between the
left and right endpoint of the $i$th segment to be inserted,
and that we need to walk left-to-right in the current subdivision to find them.
To find the next
trapezoid in this walk, we simply test if the segment endpoint that
defines the right boundary of the current trapezoid lies above or below segment $s_i$, 
e.g., determining whether to choose its upper-right or lower-right
neighbor.  Combining this above-below test with the trivial
repetition strategy from \cref{sec:repetition}, we can compute the
correct sequence of trapezoids in $O(d_i\log n)$ time w.h.p. 

Via the standard backwards analysis, $\sum_{i=1}^n \ell_i = O(n\log n)$ and $\sum_{i=1}^n d_i = O(n)$ in expectation~\cite{decomputational}. 
For completeness, we defer such details to~\cref{sec:RIC-expl}. 
It has also been shown~\cite{decomputational,sen2019unified} that the search depth of the final history DAG is $O(\log n)$ w.h.p. Therefore, after constructing the decomposition, we can use path-guided pushdown random walks to answer planar point-location queries in $O(\log n)$ time w.h.p. We conclude the following.
%

\begin{theorem}
\label{thm:trap}
We can successfully compute a trapezoidal decomposition map of $n$ non-crossing line 
segments w.h.p. in expected $O(n\log n)$ time, even with noisy primitives,
and thereby construct a data structure (the history DAG) that answers
point location queries in $O(\log n)$ time w.h.p.
\end{theorem}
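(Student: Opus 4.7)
The plan is to directly apply Theorem~\ref{thm:path} to the history DAG of the standard randomized incremental construction, using the observations already sketched in the excerpt, and then bound the total work via backwards analysis.

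First, I would formalize the history DAG: each node represents a trapezoid (the root representing the bounding box $R$, internal nodes representing trapezoids already destroyed, and leaves representing the trapezoids of the current decomposition). When a trapezoid $\tau$ is split by the insertion of a new segment endpoint or by a segment crossing $\tau$, one adds as children of $\tau$ the (at most four) new trapezoids that cover $\tau$ and partition it. The key geometric fact I would invoke is that the children of any node cover their parent \emph{disjointly}, so for any query point $q$ there is a unique path $P_q$ from the root to the unique leaf trapezoid containing $q$; this is exactly the DAG and path required by Theorem~\ref{thm:path}. For the transition oracle at a node $v$, the test ``is $q$ contained in the trapezoid of $v$?'' requires $O(1)$ sidedness primitives, and if so the oracle performs one additional primitive against the splitting segment (or endpoint) at $v$ to pick the child — so each oracle call uses $O(1)$ noisy primitives and has bounded error, which can be driven below $1/15$ by a constant number of majority repetitions.

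Next, I would handle the two sources of work for inserting the $i$th segment $s_i$. The point-location work for each endpoint is handled by a path-guided pushdown random walk with error set to $1/n^c$; by Theorem~\ref{thm:path} this takes $O(\ell_i + \log n)$ time, where $\ell_i$ is the length of the path to the endpoint in the current history DAG. The traversal across $s_i$ to identify the $d_i$ trapezoids it crosses is purely local: at each trapezoid, a single above-below test against $s_i$ picks the next trapezoid, and the trivial repetition strategy of Section~\ref{sec:repetition} makes each such test correct w.h.p., giving $O(d_i \log n)$ time for this phase. Summing and invoking the standard backwards-analysis bounds $\sum_i \ell_i = O(n\log n)$ and $\sum_i d_i = O(n)$ in expectation (these are properties of the RIC independent of the noise), I would obtain expected $O(n \log n)$ total construction time. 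A union bound over the $O(n)$ invocations of path-guided random walks (each failing with probability $1/n^c$) and the $O(n)$ traversal steps (each correct w.h.p.) makes the whole construction correct w.h.p. for $c \ge 2$.

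For the query bound after construction, I would cite the standard result that the final history DAG has depth $O(\log n)$ w.h.p., which bounds $|P_q| = O(\log n)$ for any query $q$; then a single path-guided pushdown random walk on the finished DAG answers a point-location query in $O(\log n + \log(1/\varepsilon)) = O(\log n)$ time w.h.p. by Theorem~\ref{thm:path}.

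The main obstacle I anticipate is not any deep new argument but rather verifying carefully that the transition-oracle test ``$q \in $ trapezoid of $v$'' can indeed be carried out with $O(1)$ noisy primitives at \emph{every} node visited (including destroyed trapezoids, whose bounding segments and endpoints must remain accessible as non-noisy metadata on the DAG), and that the backwards analysis, which is usually stated under correct primitives, still governs the expected lengths $\ell_i$ and $d_i$ — this follows because the structural evolution of the DAG is determined by the insertion order and the true geometry, not by the queries used to navigate it.
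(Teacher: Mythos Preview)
Your proposal is correct and follows essentially the same approach as the paper: the same trapezoid-based history DAG with the disjoint-children/unique-path observation, the same $O(1)$-primitive containment test as the transition oracle, the same split into $O(\ell_i+\log n)$ point-location work via Theorem~\ref{thm:path} plus $O(d_i\log n)$ traversal work via trivial repetition, and the same backwards-analysis and depth bounds to conclude. Your closing remarks about metadata accessibility and the independence of the structural backwards analysis from the noisy navigation are also exactly the right sanity checks.
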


It is natural to hope that this can be extended to line segment arrangements with crossings, for which the best non-noisy time bounds are $O(n\log n+k)$, achieved with a similar randomized incremental approach. 
However, as we explain in \cref{sec:TrapMapCrossings} and \cref{sec:arrangement-lb}, an extra logarithmic factor may be necessary.

\section{Plane-Sweep Algorithms}
In this section, we show that many plane-sweep algorithms
can be adapted to our noisy-primitive model.

\subsection{Noisy Balanced Binary Search Trees}
\label{sec:oracle-BST}
We begin by noting that we can implement binary search trees
storing geometric objects to support searches and updates in 
$O(\log n)$ time w.h.p. by
adapting the noisy balanced binary search trees recently
developed for numbers in quantum applications by
Khadiev, Savelyev, Ziatdinov, and Melnikov~\cite{math11224707}
to our noisy-primitive framework for geometric objects.
Plane-sweep algorithms require us to store an ordered sequence 
of events that are visited by the sweep-line throughout 
the course of the algorithm while maintaining an active set of 
geometric objects for the sweep line in a balanced binary tree. 
Some plane-sweep algorithms, such as
the one of Lee and Preparata that divides a polygon into monotone
pieces~\cite{lee1976location}, have a static set of events that can
simply be maintained as a sorted array that is constructed
using an existing noisy sorting algorithm~\cite{feige1994computing}.
Others, however, add
or change events to the event queue as the algorithm progresses. 
In addition, the
algorithms must also maintain already-processed data in a way that
allows for new data swept over to be efficiently incorporated. The
two most common dynamic data structures in plane-sweep algorithms
are dynamic binary search trees and priority queues. 
Throughout this section, noise is associated with the comparison 
``$a \leq b$?'' where $a$ and $b$ are geometric objects.

Once a data structure is built, its underlying structure can be
manipulated without noise. For example, we need no knowledge of the
values held in a tree to recognize that it is imbalanced and to
perform a rotation, allowing implementation of self-balancing binary
search trees. 
For example,
Khadiev, Savelyev, Ziatdinov, and Melnikov~\cite{math11224707}
show how to implement
red-black trees~\cite{guibas1978dichromatic} in the noisy
comparison model for numbers. 
We observe here that the same method
can be used for ordered geometric objects compared with noisy
geometric primitives,
with a binary search tree serving as the search DAG 
and a root-to-leaf search path as the path.
See \cref{sec:BST-appendix} for an explanation for how to instantiate 
path-guided pushdown random walks on a binary search tree. 

\subsection{Noisy Trapezoidal Decomposition with Segment Crossings}\label{sec:TrapMapCrossings}
We can implement an event queue
as a balanced binary search tree with a pointer to the smallest element to perform priority queue operations in the noisy setting. 

\begin{theorem}
\label{thm:segments}
Given a set, $S$, of $n$ 
$x$-monotone
pseudo-segments\footnote{A set of $x$-monotone psuedo-segments
   is a set of $x$-monotone curve segments that do not self-intersect
   and such that any two of them intersect at most 
   once~\cite{chan2003levels,agarwal2005pseudo}.
   }
 in the plane, we can construct a trapezoidal
map of $S$ in $O((n+k)\log n)$ time w.h.p., where $k$ is the number
of pairs of crossing segments, even with noisy primitive operations.
\end{theorem}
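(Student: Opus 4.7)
The plan is to adapt the classical Bentley-Ottmann plane-sweep algorithm to the noisy-primitive setting by plugging in the noisy balanced binary search trees from \cref{sec:oracle-BST} wherever the standard algorithm uses a BST or priority queue. In the non-noisy case, this algorithm processes $O(n+k)$ events (the $2n$ segment endpoints plus the $k$ intersection points), each requiring $O(1)$ operations on the sweep-line status tree and the event queue, giving $O((n+k)\log n)$ total. The goal is to preserve this bound when each comparison between pseudo-segments at the current sweep-line position, and each comparison between event $x$-coordinates, independently has error probability $p<1/2$.

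First I would set up the two data structures. The sweep-line status is a balanced BST whose keys are the currently active pseudo-segments, ordered by $y$-coordinate at the current sweep $x$; the comparison primitive ``is pseudo-segment $s$ above pseudo-segment $s'$ at $x$?'' is the noisy Boolean primitive here. The event queue is a balanced BST ordered on event $x$-coordinates, with a pointer to the minimum for extract-min; for endpoint events the $x$-coordinates are given, and for intersection events we compare intersection positions using a constant-size combination of noisy primitives. By \cref{sec:oracle-BST}, each insert, delete, neighbor-lookup, extract-min, and search in such a tree can be performed correctly in $O(\log n)$ time w.h.p.\ by instantiating a path-guided pushdown random walk on the tree from root to target leaf.

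Next I would describe event processing. At each event we perform a constant number of status-tree and event-queue operations: insert or delete a segment, find its new above/below neighbors, and test the (at most two) resulting adjacent pairs for intersection to the right of the sweep line, enqueuing any such intersections. The intersection test itself uses a constant number of noisy primitives, so by the trivial-repetition strategy of \cref{sec:repetition} we can boost it to success probability $1-1/n^c$ at an additive cost of $O(\log n)$, which is absorbed into the per-event budget. The constructed trapezoidal map is emitted incrementally: between successive events the sweep-line state fixes a vertical slab of trapezoids, and each event adds $O(1)$ trapezoids to the output, for $O(n+k)$ trapezoids in total.

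The main obstacle will be reconciling three independent sources of noise: (i) misplacement inside the status tree, (ii) reordering errors in the event queue, and (iii) false positives or negatives in the intersection test. A union bound over all $O((n+k)\cdot \mathrm{poly}(n))$ primitive tests gives a $1/n^c$ tail, provided each individual tree walk and intersection probe is driven to $\varepsilon = n^{-c'}$ failure probability for a sufficiently large constant $c'$. Since \cref{thm:path} already gives such a guarantee in $O(\log |P| + \log(1/\varepsilon)) = O(\log n)$ time per walk, each event is handled in $O(\log n)$ time w.h.p., and summing over the $O(n+k)$ events yields the claimed $O((n+k)\log n)$ bound with high probability in $n$.
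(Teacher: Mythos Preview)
Your proposal is correct and follows essentially the same approach as the paper: adapt Bentley--Ottmann by replacing the status tree and event queue with the noisy balanced BSTs of \cref{sec:oracle-BST}, handle the $O(1)$ per-event intersection tests by trivial repetition, and take a union bound over the $O(n+k)$ events. The paper's own proof is a terse version of exactly this argument. One small slip: you quote the cost of a path-guided walk as $O(\log|P|+\log(1/\varepsilon))$, but \cref{thm:path} gives $\Theta(|P|+\log(1/\varepsilon))$; since $|P|=O(\log n)$ for root-to-leaf BST paths, your conclusion $O(\log n)$ per walk is unaffected.
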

\begin{proof}
Bentley and Ottmann~\cite{bentley1979algorithms} compute a trapezoidal map of 
line segments $S$ via a now well-known
plane-sweep algorithm, which translates directly to $x$-monotone 
pseudo-segments. 
Pseudo-segment endpoints and intersection points are kept
in an event queue ordered by $x$-coordinates and line segments intersecting
the sweep line are kept in a balanced binary search tree, both
of which can be implemented to support searches and updates in $O(\log n)$
time w.h.p. in the noisy model, as described above.
Given the $O((n+k)\log n)$-time performance of the Bentley-Ottmann
algorithm, this implies that we can construct the trapezoidal map
of $S$ in $O((n+k)\log n)$ time w.h.p.
\end{proof}

We note that this running time 
matches the construction time in \cref{thm:trap} for non-crossing
line segments, but it does
not give us a point-location data structure as in \cref{thm:trap}.
Further,
the upper bound of \cref{thm:segments} is optimal
with noisy primitive operations for $k=\Theta(n^2)$
via a reduction from computing line arrangements to computing trapezoidal decompositions.
We show that computing an arrangement of $n$ lines takes $\Omega(n^2\log n)$ time in  Appendix~\ref{sec:arrangement-lb}.




\subsection{Noisy Closest Pair}
\label{sec:closest}

Here, we present another plane-sweep application in the noisy setting.

\begin{theorem}
We can find a closest pair of points, from $n$ points in the plane, with noisy primitives, in time $O(n\log n)$ w.h.p.
\end{theorem}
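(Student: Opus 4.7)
The plan is to simulate the classic plane-sweep closest-pair algorithm, using Feige et al.'s~\cite{feige1994computing} noisy sort as preprocessing and the noisy balanced BST of \cref{sec:oracle-BST} as the active-set structure. The noisy primitives involved are coordinate comparisons and squared-distance comparisons; each returns the wrong answer independently with probability at most $p<1/2$.

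First, I would sort the $n$ input points by $x$-coordinate in $O(n\log n)$ time w.h.p.\ via Feige et al. Then I sweep left to right, maintaining a running best distance $\delta$ and an active set $A$ of points whose $x$-coordinate lies within $\delta$ of the current sweep position. Store $A$ in a noisy balanced BST keyed by $y$-coordinate, and in parallel in a FIFO queue in insertion (i.e.\ $x$-) order so that stale points can be evicted from the front with $O(1)$ noisy $x$-comparisons per eviction. To process the next point $p_i$: locate the successor of $y(p_i)-\delta$ in $A$, walk forward in order while the $y$-coordinate remains at most $y(p_i)+\delta$, compare each candidate's distance to $p_i$ against $\delta$ and update $\delta$ whenever it is smaller, insert $p_i$ into the BST and FIFO, and then evict from the FIFO queue every point whose $x$-coordinate now lags the sweep by more than $\delta$.

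The standard packing argument still applies, so each $p_i$ produces only $O(1)$ candidates, giving $O(n)$ distance comparisons, $O(n)$ BST operations (insertions, deletions, successor queries, and in-order steps), and $O(n)$ eviction comparisons across the whole sweep. Each BST operation completes in $O(\log n)$ time w.h.p.\ by \cref{sec:oracle-BST}, contributing $O(n\log n)$ overall w.h.p. The remaining $O(n)$ distance and eviction comparisons are handled by the trivial repetition strategy of \cref{sec:repetition}: repeating each $\Theta(\log n)$ times and taking the majority makes an individual comparison correct with probability at least $1-1/n^{c+1}$, so a union bound over all $O(n)$ of them makes them simultaneously correct w.h.p., at an additional $O(n\log n)$ cost.

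The main obstacle is that a single erroneous distance comparison can permanently shrink $\delta$ below the true minimum pairwise distance, causing us to evict, and never reconsider, points that contain the actual closest pair; unlike a mis-placed BST key, this loss cannot be repaired by later correct operations, so path-guided pushdown random walks are not directly applicable here. Tuning the repetition parameter so that every distance and eviction comparison is individually correct with probability at least $1-1/n^{c+1}$ eliminates this hazard, since w.h.p.\ every such primitive then agrees with its noise-free counterpart, so $\delta$ and $A$ evolve exactly as in the classical sweep and the reported pair is the true closest pair. Summing the sorting, BST, and repetition costs yields the claimed $O(n\log n)$ bound w.h.p.
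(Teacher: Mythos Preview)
Your proposal is correct and follows essentially the same plane-sweep approach as the paper: noisy-sort by $x$, maintain the active strip in a noisy BST keyed by $y$, and use the trivial repetition strategy for the $O(1)$-per-point distance and activity tests. The only minor difference is that the paper selects a \emph{fixed} constant number of $y$-neighbors via non-noisy pointer chasing after insertion, whereas you walk until a noisy $y$-bound test fails; the paper's version sidesteps the small circularity of arguing that the noisy walk terminates in $O(1)$ steps while $\delta$ is itself determined by noisy comparisons.
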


\begin{proof}
Hinrichs, Nievergelt, and Schorn show how to find a pair of closest points 
in the plane in $O(n\log n)$ time~\cite{hinrichs1988plane} by
sorting the points by their $x$-coordinate and then plane-sweeping 
the points in that order. They maintain the minimum distance $\delta$ seen so far, and a \emph{$y$-table} of \emph{active points}. A point is active if has been processed and its $x$-coordinate is within $\delta$ of the current point; once this condition stops being true, it is removed from the $y$-table. The $y$-table may be implemented using a balanced binary tree. When a point is processed, the $y$-table is updated to remove points that have stopped being active, by checking the previously-active points sequentially according to the sorted order until finding a point that remains active. The new point is inserted into the $y$-table, 
and a bounded number of its nearby points in the table are selected. The distances between the new point and these selected points are compared to $\delta$, and $\delta$ is updated if a smaller distance is found.

In our noisy model, sorting the points takes $O(n\log n)$ time w.h.p.~\cite{feige1994computing}. Checking whether a point has stopped being active and is ready to be removed from the $y$-table may be done using the trivial repetition strategy of \cref{sec:repetition}; its removal is a non-noisy operation. Inserting each point into the $y$-table takes $O(\log n)$ time w.h.p. Selecting a fixed number of nearby neighbors is a non-noisy operation, and comparing their distances to $\delta$ may be done using the trivial repetition strategy with a noisy primitive that compares two distances determined by two pairs of points. In this way, we perform $O(\log n)$ work for each point when it is processed, and $O(\log n)$ work again later when it is removed from the $y$-table. Overall, the time is $O(n\log n)$ w.h.p.
\end{proof}

We give this result mostly as a demonstration for performing
plane sweep in the noisy model, since
a closest pair (or all nearest neighbors) may be found by constructing the Delaunay triangulation (\cref{sec:DT}) and then performing
min-finding operations on its edges.

An algorithm by Rabin \cite{rabin1976probabilistic} finds 2D closest points in $O(n)$ time, beating the $O(n\log n)$ plane sweep implementation that our solution is based on, in a model of computation allowing integer rounding of numerical values derived from input coordinates. However, computing the minimum of $n$ elements takes $O(n\log n)$ time for noisy comparison trees \cite{feige1994computing}, and Rabin's algorithm includes steps that find the minimum among $O(n)$ distances, so it is not faster than our algorithm in our noisy model. 
In \cref{sec:closest-lb}, we prove that
if data is accessed only through noisy primitives that combine information from $O(1)$ data points (allowing integer rounding), then finding the closest pair w.h.p. requires time $\Omega(n\log n)$. Thus, the plane-sweep closest-pair algorithm 
is optimal in this model.

\section{Convex Hulls}
In this section, we describe algorithms for constructing
convex hulls that can tolerate 
probabilistically noisy primitive operations.
Here, primitive operations are orientation tests and visibility 
tests for 2D and 3D convex hulls respectively. Sorting is also
used throughout, so comparing the $x$ or $y$ coordinates of
two points is also a noisy primitive.

\subsection{Static Convex Hulls in 2D}\label{sec:2DHull}
We begin by observing that it is easy to construct
two-dimensional convex hulls in $O(n\log n)$ time w.h.p.
Namely, we
simply sort the points using a noise-tolerant sorting algorithm~\cite{feige1994computing}
and then run Graham scan~\cite{decomputational}. 
The Graham scan phase of the algorithm
uses $O(n)$ calls to primitives, 
so we can simply use the trivial repetition strategy from \cref{sec:repetition}
to implement this phase in $O(n\log n)$ time w.h.p.
We note that this is the best possible, however,
since just computing the minimum or maximum of $n$ values
has an $\Omega(n\log n)$-time lower bound for a high-probability bound
in the noisy primitive model~\cite{feige1994computing}.

\subsection{Dynamic Convex Hulls in 2D}\label{sec:Dyn2D}

Overmars and van Leeuwen~\cite{overmars1981maintenance}
show how to maintain
a dynamic 2D
convex hull with $O(\log ^2 n)$ insertion and deletion times, 
$O(\log n)$ query time, and $O(n)$ space.
They maintain a binary search tree, $T^*$,
of points stored in the leaves ordered by $y$-coordinates 
and augment each internal node with
a representation of half the convex hull of the points in that
subtree. 
They define an $lc$-hull of~$P$ as the convex hull
of $P \cup \{(\infty, 0)\}$, which is the left half of the hull of
$P$ ($rc$-hull is defined symmetrically). It is easy to see that
performing the same query on both the $lc$- and $rc$-hull of $P$ 
allows us to determine if a query point is inside, outside,
or on the complete convex hull. 
We follow this same approach, showing
how their approach can be implemented in the noisy-primitive 
model using path-guided pushdown random walks.

Overmars and van~Leeuwen show that two $lc$-hulls $A$ and $C$
separated by a horizontal line can be merged in $O(\log n)$ time.
Rather than being a single binary search, however, their method
involves a joint binary search on the two trees
that represent $A$ and $C$ with the aim of finding a tangent line
that joins them. 
At each step, we follow one of ten cases to determine
how to navigate the trees of $A$ and/or $C$. 
Because each decision may only advance in one tree,
noisy binary search is not sufficient. 
Fortunately, as we show, the path-guided
pushdown random walk framework is powerful enough to solve this
problem in the noisy model.
The challenge, of course, is to define an appropriate transition oracle.
Say that the transition oracle has the ability to identify which
case corresponds to our current position in~$A$ and~$C$. This
decision determines which direction to navigate in~$A$ and/or~$C$ and
corresponds to choosing a child to navigate down an implicit
decision DAG wherein each node has ten children. It remains to
show how a transition oracle acting truthfully can determine if we
are on the correct path.

We have each node of $A$ and $C$
maintain $v.l$ and $v.r$ pointers as described in \cref{sec:BST-appendix}.
The values at $v.l$ and $v.r$ are ancestors of 
$v$ whose keys are just smaller and just larger 
than $v$ respectively. They determine the
interval of possible values of any node in 
$v$'s subtree, which corresponds to an interval of points 
on each hull. See Figure~\ref{fig:dynamic-hull} for an example.

Say
we are currently at node $a \in A$ and node $c \in C$. We can perform
four case comparisons: $a.r$ with $c.r$, $a.r$ with $c.l$, $a.l$
with $c.r$, and $a.l$ with $c.l$. Each outputs a
region of $A$ and $C$ that the two tangent points must be located
in. 
If the intersection of all four case comparisons
contains the regions bounded by $a.r$ and $a.l$ as well as $c.r$ and $c.l$, then we are on
a valid path by the correctness of Overmars and van Leeuwen's case
analysis~\cite{overmars1981maintenance}. 
Since the noise-free process
takes $O(\log n)$ time, with a path-guided pushdown random
walk using the transition oracle described above, this process
takes $O(\log n)$ time w.h.p.

\begin{figure}
    \centering
    \includegraphics[width=0.55\linewidth]{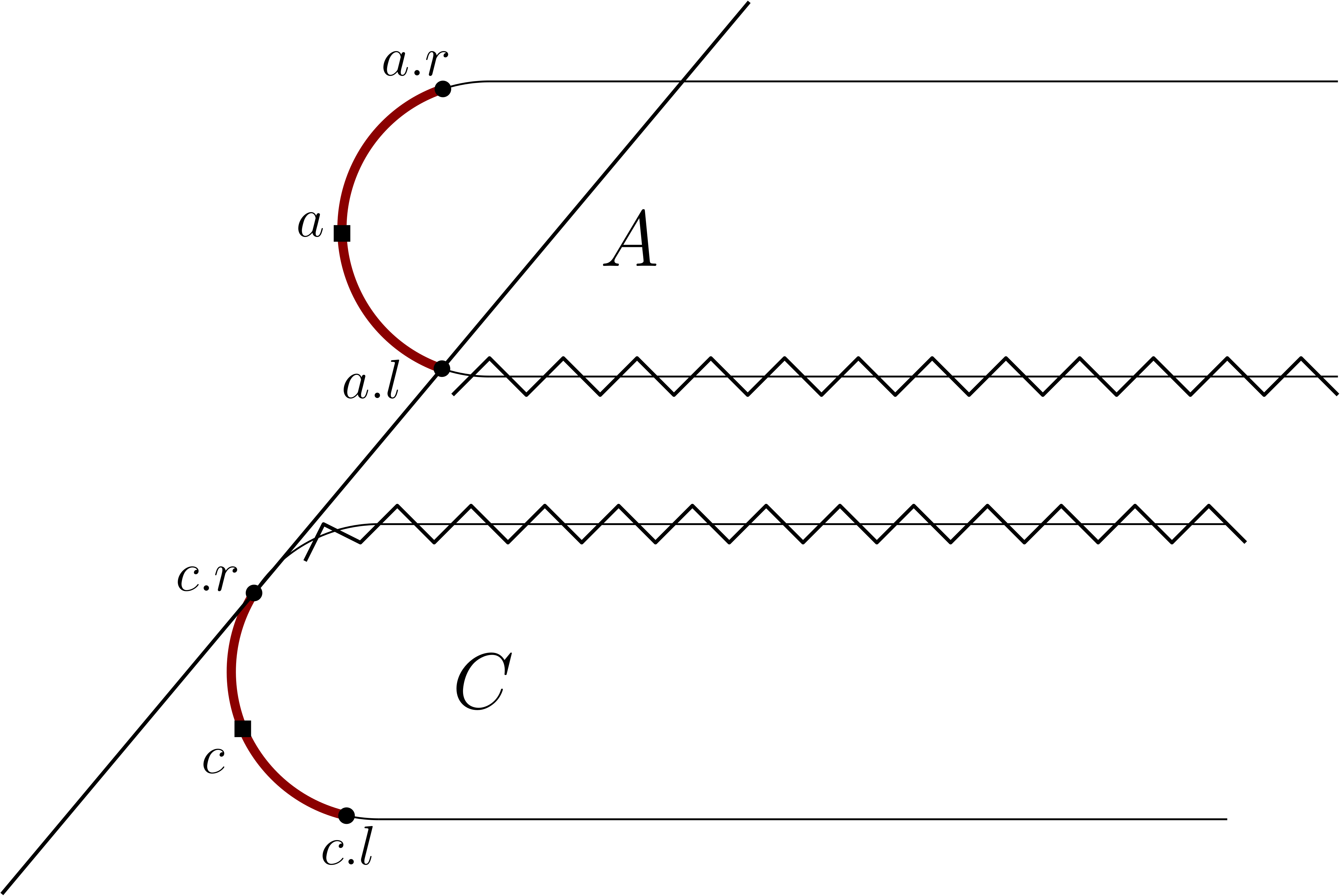}
    \caption{
    Here we show one of four comparisons computed by the transition oracle.
    The points $a$ and~$c$ represent the node we are at in each structure. 
    The highlighted regions represent all possible values in that node's subtree, bordered by points $a.r$ and $a.l$ (resp. $c.r$ and $c.l$).
    This comparison is valid as it does not eliminate the highlighted regions. If all four are valid, 
    then we must be on the right path in our double binary search. 
    }
    \label{fig:dynamic-hull}
\end{figure}

To query a hull, we can perform noisy binary search
on the $lc$- and $rc$-hull structures using the transition oracle
of \cref{sec:BST-appendix}. Updating the
convex hull utilizes the technical lemma discussed above along with
split and join operations on binary search trees
\cite{overmars1981maintenance}. 
We conclude the following.
\begin{theorem}
\label{thm:2D-dynamic}
We can insert and delete points in a planar convex hull in $O(\log^2 n)$ time per update w.h.p., even with noisy primitives.
\end{theorem}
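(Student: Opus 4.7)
The plan is to follow Overmars and van~Leeuwen's dynamic 2D hull structure~\cite{overmars1981maintenance} and show that every one of its primitive subroutines admits an $O(\log n)$-time implementation w.h.p.\ using the tools developed earlier in this section. I would maintain a balanced binary search tree $T^*$ with the input points stored at its leaves in $y$-coordinate order, augment each internal node with its $lc$-hull and $rc$-hull (themselves represented as concatenable trees), and keep the ancestor pointers $v.l,v.r$ at every node of every hull tree. Pointer manipulations, rotations, and the split/join operations on already-built trees are non-noisy because they do not inspect geometric values, so noise enters only when we need to search for a key or identify a tangent.

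The key building block is the \emph{bridge-finding merge}: given two adjacent $lc$-hulls $A$ and $C$, locate the common tangent that stitches them into a single $lc$-hull. I would cast Overmars and van~Leeuwen's synchronous two-tree descent as a DAG $G$ whose vertices are pairs $(a,c)\in A\times C$ with the ten possible case transitions as outgoing edges; the correct descent traces a unique path $P$ of length $O(\log n)$ ending at the tangent. The transition oracle at $(a,c)$ performs the four tangency tests between $\{a.l,a.r\}$ and $\{c.l,c.r\}$ described in the excerpt; by the correctness of the Overmars--van~Leeuwen case analysis, $(a,c)$ lies on $P$ exactly when all four tests are consistent with the interval constraints encoded by these ancestor pointers, and in that case the same tests identify which child(ren) to descend into. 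Since each test requires $O(1)$ noisy primitives, \cref{thm:path} applied with $\varepsilon=n^{-c}$ resolves a merge in $O(\log n)$ time w.h.p.

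To finish, I would assemble the update and query operations exactly as in Overmars--van~Leeuwen. An insertion or deletion first performs a noisy binary search on $T^*$ for the $y$-coordinate of the updated point (handled by the BST oracle of \cref{sec:BST-appendix} in $O(\log n)$ w.h.p.), and then walks back up the search path performing, at each of the $O(\log n)$ ancestors, one split and one merge on the augmented hull trees; the $T^*$ rebalancing rotations that may follow are purely non-noisy pointer surgery. Since each split/merge costs $O(\log n)$ w.h.p.\ and there are $O(\log n)$ of them per update, a union bound over the polynomially many noisy subroutine calls yields total update time $O(\log^2 n)$ w.h.p. A point-in-hull query reduces to two BST searches on the top-level $lc$- and $rc$-hulls and takes $O(\log n)$ w.h.p.

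The main obstacle is certifying that the four-comparison oracle faithfully tests ``$(a,c)\in P$'' rather than merely a necessary condition: a pair $(a,c)$ off the true descent path might conceivably pass all four local tests within its current interval window, which would allow the walk to wander into a combinatorially distinct branch and destroy the $O(\log n)$ walk length required by \cref{thm:path}. Resolving this reduces to a careful invariant that the intervals $[a.l,a.r]$ and $[c.l,c.r]$ track exactly the hull arcs among which the bridge endpoints must lie, so that the Overmars--van~Leeuwen case analysis guarantees a \emph{unique} transition consistent with the true tangent at every step. Once this equivalence is established, \cref{thm:path} supplies the high-probability bound and the remainder of the argument is routine bookkeeping.
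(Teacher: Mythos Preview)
Your proposal is correct and follows essentially the same approach as the paper: cast the Overmars--van~Leeuwen joint tangent search as a DAG on pairs $(a,c)$, use the four comparisons among $\{a.l,a.r\}\times\{c.l,c.r\}$ as the transition oracle's on-path test, and apply \cref{thm:path} to get $O(\log n)$ per merge and hence $O(\log^2 n)$ per update. The obstacle you flag---whether the four tests truly characterize membership in $P$---is handled in the paper exactly as you anticipate, by appealing to the correctness of the Overmars--van~Leeuwen case analysis on the intervals $[a.l,a.r]$ and $[c.l,c.r]$.
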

\subsection{Convex Hulls in 3D}\label{sec:3D-hull}

In this section, we show how to construct 3D convex hulls 
in $O(n\log n)$ time w.h.p. even with noisy primitive operations.
The main challenges in this case are first to define an appropriate
algorithm in the noise-free model and then define a good transition
oracle for such an algorithm.
For example, it does not seem possible to efficiently implement
the divide-and-conquer algorithm 
of Preparata and Hong~\cite{preparata1977convex} in the noisy primitive
model as each combine step performs $O(n)$ primitive operations.
See \cref{sec:history-DAG-ex} for an explanation of the 3D hull
random incremental construction we base our solution on. 
We begin
by constructing a tetrahedron through any four points; we need the
orientation of these four points, which can be found by the trivial
repetition strategy in $O(\log n)$ time. The points within this
tetrahedron can be discarded, again using the trivial repetition
strategy in $O(\log n)$ time per point.
\begin{figure}
    \centering
    \includegraphics[width=0.45\linewidth]{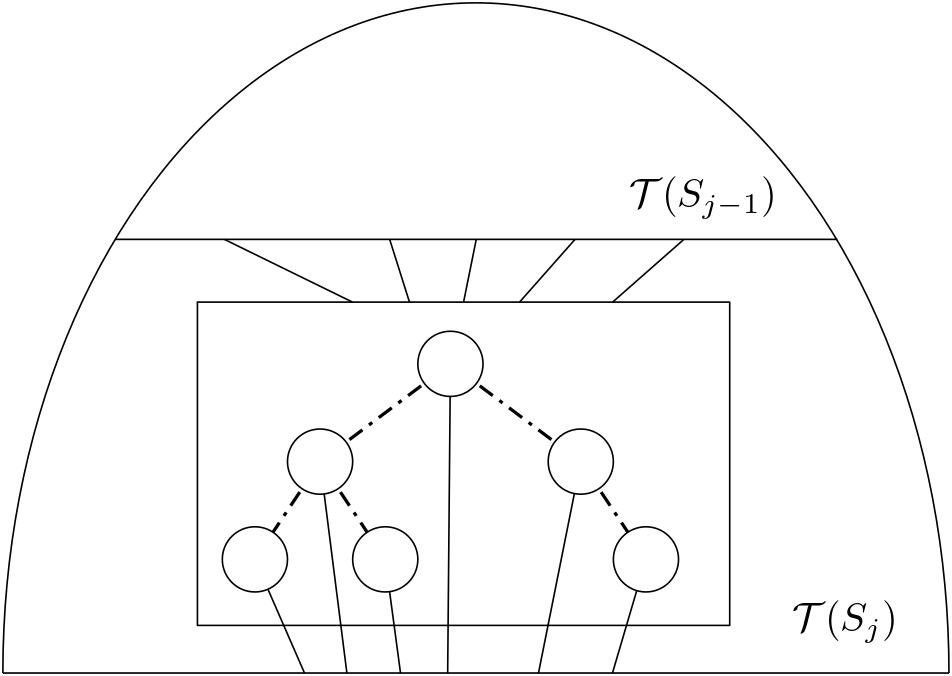}
    \caption{A depiction of how the radial-search structure is embedded into the history DAG. $\mathcal T(S_{j-1})$ represents the history DAG before the $j$th point was added to the hull. Every node that was a leaf in $\mathcal T(S_{j-1})$ now points to each node created in $\mathcal T(S_j)$. 
    Destroyed nodes have pointers to all created nodes. The dashed lines between the nodes in $\mathcal T(S_j)$, represent the connections of the radial search structure, represented as a BST.}
    \label{fig:radial-search}
\end{figure}
The main challenge to designing a transition oracle for the history
DAG in this method is its nested nature. We have the nodes in the DAG themselves
representing facets and pointing to the facets that replaced them
when they were deleted. We also have the radial search structures,
which maintain the set of new facets produced in a given iteration
in sorted clockwise order around the point that generated them. See
\cref{fig:radial-search} for a depiction of this. 

Our transition oracle
needs to be able
to determine if we are on the correct path.
Say we are given a query point, $q_i$, being inserted in the $i$th iteration of the RIC algorithm and are
at node $v$ of the history DAG. To see if we are performing the
right radial search, we first check if the ray from the origin to
$q$ passes through $v$'s parent (in the history DAG, not the radial
search structure). Then we must determine if we are performing
radial search correctly. To do so, we can augment the radial search
tree as described in \cref{sec:BST-appendix} and perform two more
orientation tests to determine if $v$ is on the right path in our
current radial search structure. If a comparison returns false,
then we use our stack to backtrack up the radial search structure
or, if we are at its root, back to the parent node in the history
DAG and its respective radial search tree. Thus, in just three
comparisons, we can determine if we are on the right path in the
history DAG. Similar to our trapezoidal decomposition analysis, we observe that the point-location cost in each iteration is $O(\ell_i + \log n)$, where $\ell_i$ is the length of the unique path in the history DAG corresponding to our query $q$. An analysis by Mulmuley~\cite{mulmuley1994computational} shows that, over all $n$ iterations, $\sum_{i=1}^n \ell_i = O(n\log n)$ in expectation. As a result, total point-location cost is $O(n\log n)$ in expectation.  


Once a conflicting facet is found through the random walk, we can
again walk around the hull and perform the trivial repetition
strategy of \cref{sec:repetition} to determine the set of all conflicting
facets, $X_i$, in $O(|X_i|\log n)$ time.
Each facet must be created before it is destroyed, 
so $\sum_{i=1}^n|X_i|$ is upper-bounded by the total size of the history DAG. 
The same analysis by Mulmuley~\cite{mulmuley1994computational} shows that the expected size of the history DAG is $O(n)$. We conclude the following.
\begin{theorem}
\label{thm:3D-hull}
We can successfully compute a 3D convex hull of $n$ points w.h.p. in expected $O(n\log n)$ time, even with noisy primitives.
\end{theorem}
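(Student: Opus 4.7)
The plan is to adapt the standard randomized incremental construction (RIC) for 3D convex hulls to the noisy setting by using \cref{thm:path} (path-guided pushdown random walks) for point location in the history DAG, and the trivial repetition strategy of \cref{sec:repetition} for the bounded-cost operations. First I would initialize the algorithm by selecting four input points, running $O(\log n)$ repetitions of the orientation primitive to certify that they form a non-degenerate tetrahedron, and then using the same trivial repetition on each remaining point to discard the points inside; this costs $O(n \log n)$ time w.h.p. overall. I would then insert the surviving points in a uniformly random order, maintaining a history DAG whose leaves are current facets and whose internal nodes are destroyed facets with pointers to their replacements, and attaching to each iteration's new facets a balanced binary search tree (the radial search structure) holding them in clockwise order around the point $q_j$ that created them; these structures are stitched into the DAG so that each formerly-leaf node points into the radial trees of facets that replaced it.

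The key step — and the main obstacle — is designing a transition oracle so that, at any node $v$ of this nested structure, a constant number of noisy primitives suffices to test whether $v$ lies on the unique path that a noiseless search for the new point $q_i$ would follow. I would handle this in two layers. The outer layer is the history DAG itself: from the parent $u$ of $v$ (a destroyed facet created in some iteration $j$), I would perform one orientation test to verify that the ray from the origin through $q_i$ stabs the facet $u$, which confirms that the radial search tree of iteration $j$ is the correct one to enter. The inner layer is the radial search tree, which I would augment exactly as in \cref{sec:BST-appendix} so that each node carries pointers $v.l$ and $v.r$ to the in-order predecessor and successor that bracket its subtree; two orientation comparisons between $q_i$ and the facets stored at $v.l$ and $v.r$ then determine whether $q_i$'s angular position lies in $v$'s interval. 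If all three of these tests succeed, the oracle declares $v$ on-path and returns (via one more comparison) the correct child; if any fails, it reports off-path and triggers the backtracking step of \cref{thm:path}. Because each test is a single noisy primitive, independence across queries holds, and the oracle's error probability is bounded by a constant, which after constant repetition can be driven below the $1/15$ threshold required by \cref{thm:path}.

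Given this oracle, the $i$th point-location invocation runs for $O(\ell_i + \log n)$ steps with failure probability $1/n^c$ by \cref{thm:path}, where $\ell_i$ is the depth of the unique on-path from the root to the conflicting leaf for $q_i$. Once a single conflicting facet is discovered, I would walk around the ``horizon'' on the current hull to enumerate the full conflict set $X_i$, applying the trivial repetition strategy at each step so that this phase costs $O(|X_i| \log n)$ time and succeeds w.h.p. I would then perform the structural update — creating the new cone of facets incident to $q_i$, linking them into the history DAG, and building a fresh radial search tree — using only non-noisy pointer manipulations, as permitted by \cref{sec:prelim-noise}.

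For the global bound I would invoke Mulmuley's backwards analysis~\cite{mulmuley1994computational}, which gives $\mathbb{E}\bigl[\sum_{i=1}^n \ell_i\bigr] = O(n\log n)$ for the search depths and $\mathbb{E}\bigl[\sum_{i=1}^n |X_i|\bigr] = O(n)$ for the total facet turnover, so summing the above per-iteration costs yields $O(n\log n)$ expected total time. Finally, a union bound over the $O(n)$ invocations of \cref{thm:path} and the $O(n \log n)$ trivial-repetition subroutines — each with failure probability at most $1/n^c$ for a sufficiently large constant $c$ — shows that all noisy operations succeed w.h.p., establishing the theorem. The subtle part of the argument is verifying that the three-primitive on-path test really does characterize the noiseless search path across the nested DAG-inside-DAG structure; this is where I would spend the most care.
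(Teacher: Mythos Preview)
Your proposal is correct and follows essentially the same approach as the paper: the same history-DAG RIC with radial search trees, the same three-primitive transition oracle (one ray-through-parent test plus two \texttt{v.l}/\texttt{v.r} orientation tests from the BST augmentation), the same use of trivial repetition for initialization and the horizon walk, and the same appeal to Mulmuley's bounds $\sum_i \ell_i = O(n\log n)$ and $\sum_i |X_i| = O(n)$ in expectation. Your write-up is, if anything, slightly more explicit than the paper's about the union bound and about pushing the oracle's error rate below $1/15$ via constant repetition.
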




\section{Delaunay Triangulations and Voronoi Diagrams}\label{sec:DT}

In this section, we describe an algorithm that computes the Delaunay
triangulation (DT) of a set of points in the plane in the noisy-primitives
model. Noise is associated with determining whether a point
$p_i$ lies within some triangle $\Delta$ and whether a point $p_i$ lies
in the circle defined by some Delaunay edge $e$. Here, we
describe the algorithm under the Euclidean metric. 
We later show how to generalize our algorithm 
for other metrics.

Once again, we use a history DAG RIC similar to what we described
in the previous section. This time, each node in the DAG
represents the triangles that exist throughout the construction of
the triangulation (see
\cite{mulmuley1994computational} for details). A leaf node is a
triangle that exists in the current version, and an internal node
is a triangle that was destroyed in a previous iteration. When we
wish to insert a new point $p_i$, we first use the history DAG to locate
where $p_i$ is in the current DT. If $p_i$ is within a triangle $\Delta$,
we split $\Delta$ into three triangles and add them as children of
$\Delta$. Otherwise $p_i$ is on some edge $e$, and we split the two
adjacent triangles $\Delta_1$ and $\Delta_2$ into two triangles
each. Once this is done, we repeatedly flip edges that violate the
empty circle property so that the triangulation becomes a DT again
(see~\cite{decomputational} Chapter 9.3). As described in
\cite{mulmuley1994computational}, we store the new triangles in a radial
search structure sorted CCW around the added point $p_i$.
Using a path-guided pushdown random walk in the history DAG to find
the triangle containing $p_i$ is broadly similar to the process used
to find conflicting faces in the 3D convex hull RIC. Once again,
we use the ray-shooting and radial search queries described in
\cref{sec:3D-hull}. 
With a similar
transition oracle to 3D convex hulls, we can determine where
$p_i$ is in the current DT in $O(\ell_i + \log n)$ time, where $\ell_i$ is the length of $p_i$'s unique path in the history DAG.  
Unlike for 3D convex hulls, 
there is a unique triangle or 
edge that is in conflict with $p_i$, 
which simplifies the process of determining a 
unique path through the history DAG. 
We observe that if $p_i$ is not located in the triangle
$\Delta$ represented by the current node in the DAG, then we are
not on the right path. 
Once we
have found the triangle containing~$p_i$ and added in new edges, 
we begin edge flipping. Using
the trivial repetition strategy, each 
empty circle test costs $O(\log n)$ time. 
The number of empty circle tests is proportional
to the number of edges flipped, which is less than the total number
of triangles created throughout the algorithm. 
Once again, Mulmuley's~\cite{mulmuley1994computational} analysis shows that $\sum_{i=1}^n\ell_i = O(n\log n)$ in expectation and the number of empty circle tests needed to update the DAG after point-location is $O(n)$ in expectation. We conclude the following.
\begin{theorem}
\label{thm:DT}
We can successfully compute a Delaunay triangulation of $n$ points w.h.p. in expected $O(n\log n)$ time, even with noisy primitives.
\end{theorem}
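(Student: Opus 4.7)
The plan is to adapt the classical randomized incremental construction of the Delaunay triangulation to the noisy-primitive model, using the history DAG as the DAG $G$ in \cref{thm:path} and replacing each noise-free point location with a path-guided pushdown random walk. The main task is to design, for this particular history DAG, a constant-time transition oracle $T$ that, for each node $v$, can verify whether $v$ lies on the unique root-to-leaf path corresponding to the triangle currently containing the point $p_i$ being inserted, and, if so, return the correct next node. Given the oracle, \cref{thm:path} with $\varepsilon = 1/n^c$ yields a successful point location in $O(\ell_i+\log n)$ steps w.h.p., where $\ell_i$ is the length of the unique correct path for $p_i$.

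I would specify the oracle by composing two layers of checks, mirroring the design for 3D convex hulls in \cref{sec:3D-hull}. For a node $v$ representing a (possibly destroyed) triangle $\Delta_v$, the oracle first runs a point-in-triangle test: if $p_i\notin\Delta_v$ the oracle reports ``off path''. Otherwise it must pick the right child. The children of $v$ are organized by the radial-search structure produced when $\Delta_v$ was split; I would store this structure as a balanced BST augmented with the $v.l,v.r$ pointers of \cref{sec:BST-appendix}, so that two extra primitive tests suffice to check that the new point still lies in the angular interval corresponding to $v$'s subtree and, if so, to return the correct child. Because each check uses $O(1)$ noisy primitives and amounts to an independent coin flip with bias away from $1/2$, a constant number of repetitions brings the per-step error probability below the $1/15$ threshold required by \cref{thm:path}.

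With point location in hand, the update step is routine: once the triangle (or edge) in conflict with $p_i$ is found, we insert $p_i$, create the new triangles, attach them as children of the destroyed nodes in the history DAG and in a new radial-search BST, and then repeatedly flip edges that violate the empty-circle property. Each empty-circle test is implemented by the trivial repetition strategy of \cref{sec:repetition} in $O(\log n)$ time, so its answer is correct w.h.p. The number of flips at step $i$ is proportional to the number of triangles created at that step, and hence to the number of new nodes added to the history DAG. Summing, we get a per-step update cost of $O((1+f_i)\log n)$ where $f_i$ is the number of triangles created, plus the $O(\ell_i+\log n)$ point-location cost.

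To finish, I would invoke Mulmuley's backwards analysis~\cite{mulmuley1994computational}, which gives $\mathbb{E}[\sum_i \ell_i]=O(n\log n)$ and $\mathbb{E}[\sum_i f_i]=O(n)$, so the expected total running time is $O(n\log n)$. Correctness is obtained by a union bound: each of the $O(n)$ path-guided walks fails with probability $O(1/n^c)$, and each of the $O(n)$ repetition-based empty-circle decisions can likewise be made to fail with polynomially small probability. I expect the main obstacle to be the oracle design: because the DAG mixes two nested structures (history and radial search), one must carefully verify that the constant-time check really does identify the \emph{unique} correct path for $p_i$, so that an oracle that never lies would induce exactly the walk the noise-free RIC performs; once this is done, \cref{thm:path} takes care of the probabilistic analysis.
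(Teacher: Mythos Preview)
Your proposal is correct and follows essentially the same approach as the paper: you use the history DAG with embedded radial-search structures as the graph for the path-guided pushdown random walk, define the transition oracle via a point-in-triangle test at the history-DAG level and the $v.l,v.r$-augmented BST test at the radial-search level, handle edge flipping with the trivial repetition strategy, and close with Mulmuley's backwards analysis and a union bound. The only cosmetic difference is that the paper phrases the top-level check as the ray-shooting query inherited verbatim from the 3D hull section, whereas you describe it directly as a point-in-triangle test; for Delaunay triangulations these are the same primitive.
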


We remark that this also leads to an algorithm for constructing a Euclidean minimum spanning tree, in which we construct the Delaunay triangulation, apply a noisy sorting algorithm to its edge lengths, and then (with no more need for noisy primitives) apply Kruskal's algorithm to the sorted edge sequence. 
In 
the following section,
we generalize our Delaunay triangulation algorithm to work for a wider class of metrics, including all $L_p$ metrics for $1 < p < \infty$. 


\subsection{Generalized Delaunay Triangulations}\label{sec:DT-gen}

In this section, we show the following:

\begin{theorem}
Given $n$ points in the plane, we can successfully compute a generalized Delaunay triangulation generated by homothets of any smooth shape w.h.p. in expected time $O(n\log n)$, even with noisy primitives.
\end{theorem}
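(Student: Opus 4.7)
The plan is to adapt the randomized incremental construction and transition oracle from \cref{thm:DT} essentially unchanged, relying on the fact that for any smooth convex shape $S$, the generalized Delaunay triangulation $\mathrm{DT}_S(P)$ satisfies the same structural properties that the Euclidean case uses: under general position it is a triangulation, it is characterized locally by the empty-homothet property (the homothet of $S$ circumscribing each triangle contains no other site), and it is reachable from any triangulation of $P$ by flips of edges that violate this property. These facts are standard for smooth homothets and are precisely what the Mulmuley-style history-DAG RIC requires.

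Concretely, I would reuse the same two noisy primitives, reinterpreted in the generalized setting: (i) a point-in-triangle test (unaffected by the change of metric), and (ii) an \emph{empty-homothet} test that, given an edge $e$ with adjacent vertex $q$ and a candidate point $p$, decides whether $p$ lies inside the $S$-homothet through the three points defining the configuration. Both return a single Boolean, so they slot directly into the noisy-primitive model with failure probability $p<1/2$. The algorithm still proceeds by building an initial bounding simplex, inserting points one at a time in random order, using the history DAG together with path-guided pushdown random walks and the same transition oracle as in \cref{thm:DT} (test containment in the current node's triangle, then at most $O(1)$ ray-shooting and radial-search comparisons), and finally restoring the generalized-DT property by flipping edges for which the empty-homothet test fails, decided by the trivial repetition strategy of \cref{sec:repetition}.

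For the running-time analysis I would invoke the abstract backwards-analysis framework for randomized incremental constructions on configuration spaces of bounded combinatorial complexity: since each generalized Delaunay triangle is determined by $O(1)$ sites and each site is in conflict with $O(1)$ triangles of any fixed triangulation in expectation after random insertion, the same bounds $\sum_i \ell_i = O(n\log n)$ and total history-DAG size $O(n)$ in expectation hold. Combined with \cref{thm:path}, this yields $O(\ell_i + \log n)$ time w.h.p.\ for each point-location, $O(\log n)$ time w.h.p.\ per empty-homothet test, and hence expected $O(n\log n)$ total time with high probability.

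The main obstacle is simply justifying that the abstract RIC bounds apply to $\mathrm{DT}_S$ for every smooth shape $S$; this is where smoothness is essential, since non-smooth shapes can produce degenerate configurations and non-unique ``triangulations'' for which the configuration-space framework breaks down. For smooth $S$ this reduces to citing the standard generalized-DT theory (e.g., that the empty-homothet predicate defines a valid abstract Voronoi/Delaunay structure with $O(1)$-size defining subsets), after which every other step of the Euclidean proof transfers verbatim.
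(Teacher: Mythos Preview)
Your proposal is correct and follows essentially the same approach as the paper: adapt the Euclidean history-DAG RIC verbatim, justify that smoothness guarantees a genuine triangulation and flip convergence, and observe that the standard backwards analysis carries over. The only differences are cosmetic---the paper cites Aurenhammer and Paulini explicitly for flip convergence and spells out the three backwards-analysis ingredients (planarity gives $O(r)$ edges, new triangles are incident to $p_r$, triangles have $\le 3$ defining points), whereas you appeal to the abstract configuration-space framework.
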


A \emph{shape Delaunay tessellation} is a generalization of a Delaunay triangulation first defined by Drysdale \cite{drysdale1990practical}. Rather than a circle, we instantiate some convex compact set $C$ in $\mathbf{R}^2$ and redefine the empty circle property on an edge $\overline{pq}$ like so: given a set of points $S$ and some shape $C$, the edge $\overline{pq}$ exists in the shape Delaunay tessellation $DT_C(S)$ iff there exists some homothet of $C$ with $p$ and $q$ on its boundary that contains no other points of $S$ \cite{aurenhammer2014shape}. 
We will consider smooth shapes, as arbitrary non-smooth shapes may cause $DT_C(S)$ to not be a triangulation under certain point sets \cite{aurenhammer2014shape}. Note that all $L_p$ metrics for $1 < p < \infty$ correspond to smooth shapes (rather than a unit circle, they are unit rounded squares) and furthermore, all smooth shapes produce triangulations of the convex hull of $S$ \cite{aurenhammer2014shape,skyum1991sweepline}. Similar to above, we use the general position assumption that no four points lie on a homothet of $C$.

After inserting a point, $p_r$, in the incremental algorithm, 
we draw new edges that connect~$p_r$ to the points that comprise the triangle(s) that $p$ landed in. 
These new edges are Delaunay as either they are fully enclosed by the homothets of $C$ that covered the triangle(s) $p$ was located in or they are enclosed by the union of previously-empty shapes that covered each edge of the triangles. After this, the algorithm finds adjacent edges, tests if they obey the empty shape property, and flips them if not. 
In Theorem 3 of a work by Aurenhammer and Paulini \cite{aurenhammer2014shape}, the authors prove that for any convex shape $C$, local Delaunay flips lead to $DT_C(S)$ by showing that there exists some lexicographical ordering to these triangulations similar to how the Delaunay triangulation in $L_2$ maximizes the minimum angle over all triangles. Thus, the flipping algorithm used after an incremental insertion will terminate at a triangulation that is $DT_C(P_r)$, where $P_r$ is the set $\{p_1, ..., p_r\}$. 

Because the flipping process works as before, the point location structure can be easily adapted to this setting. As stated above, using a smooth shape $C$ guarantees that $DT_C(S)$ is an actual triangulation as opposed to a tree. Thus, in every iteration, there exists a triangle that the new point is enclosed by. The history DAG data structure is agnostic to the specific edge flips being made during the course of the algorithm and so works as expected. Our transition oracle also behaves the same.

In the Euclidean case, the backwards analysis of~\cite{decomputational} uses the fact that a Delaunay triangulation of $r$ points has $O(r)$ edges, any new triangle created in iteration $r$ is incident to the newly inserted point $p_r$, and triangles are defined by at most three points. By planarity and by construction of the algorithm, all three hold true in this generalized case. Thus, we have the desired result. 

\section{Discussion}
We have shown that a large variety of computational
geometry algorithms can be implemented in optimal time or, for dynamic 2D hull, in time that matches
the original work's complexity
with high probability
in the noisy comparison model. We believe that this is the first
work that adapts the techniques of noisy sorting and searching to
the classic algorithms of computational geometry,
and we hope that it inspires work in other settings,
such as graph algorithms.






\bibliography{refs}

\appendix
\section{Path-guided Pushdown Random Walks}\label{sec:dag-lemma-pf}

\subsection{Correctness and analysis}
In this section we prove that w.h.p. the path-guided random walk terminates with the correct answer to its search problem (\cref{thm:path}).
To do so we first need to define a termination condition for the walk. It is not adequate to terminate merely after an appropriate number of steps: with constant probability, the final step of the walk will be taken after an erroneous oracle result, and may be incorrect. On the stack used to guide the algorithm, we store along with each vertex a \emph{repetition count}, equal to one if the vertex is different from the previous vertex on the stack, and equal to the previous repetition count otherwise. We terminate the algorithm when this repetition count reaches an appropriately large value, $\Theta(\log(1/\varepsilon))$.

\begin{lemma}
\label{lem:whp-goal}
If it does not terminate earlier, the path-guided random walk will reach the correct goal vertex $t$, within $\Theta(|P|+\log(1/\varepsilon))$ steps, with high probability, with a constant factor determined by the analysis below.
\end{lemma}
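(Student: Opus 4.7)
The plan is to define a potential function that tracks how far the walk is from reaching $t$, argue that truthful oracle responses decrease it while mendacious responses increase it, and then apply a Chernoff tail bound to a biased $\pm 1$ random walk.

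First I would introduce a potential $\phi$ on the (stack, current-vertex) state that is bounded above by $|P|$, equals zero precisely when $t$ sits on the top of the stack, and changes by at most one at each step. The natural definition is the minimum number of traversal steps that a noiseless version of the procedure would take from the current state in order to push $t$ onto the stack; equivalently, $\phi$ counts the number of non-$P$ vertices above the correct prefix on the stack, plus the number of $P$-prefix vertices still to be pushed before $t$. A case analysis over the four branches of the traversal rule (truthful pop, truthful forward push, the ``$v = w$'' self-loop, and the ``$v = S.\mathsf{top}()$'' backtrack) would verify that every truthful oracle response decreases $\phi$ by exactly one and every mendacious response increases $\phi$ by at most one. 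The key invariant that makes this work is that the stack always encodes a legitimate walk in $G$, which the algorithm enforces by rejecting any purported next vertex $w$ that is neither $v$ nor an out-neighbor of $v$. This is the formal content of the paper's informal remark that ``each good action by $T$ can undo any bad action by $T$.''

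Next I would reduce the evolution of $\phi$ to a biased random walk on $\{0, 1, \ldots, |P|\}$ whose step is $-1$ with probability at least $1 - p_e$ and $+1$ with probability at most $p_e$. Letting $\alpha$ and $\beta$ denote the numbers of truthful and mendacious responses over $N$ steps, one has $\phi(N) \leq |P| - (\alpha - \beta)$, and since $p_e < 1/15$, the expectation of $\alpha - \beta$ is at least $(13/15)N$. A standard Chernoff bound yields $\Pr[\alpha - \beta < |P|] \leq \exp(-\Omega(N))$ once $N \geq C_1 |P|$ for a suitable constant $C_1$, and taking $N = C(|P| + \log(1/\varepsilon))$ for a sufficiently large constant $C$ makes this probability at most $\varepsilon$. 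On the complementary event $\phi$ reaches zero at some step in $\{1, \ldots, N\}$, so $t$ appears on the top of the stack within those $N$ steps.

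The main obstacle is the case analysis certifying the step-by-step behavior of $\phi$: one must verify that truthful responses always make exactly the right amount of progress toward $t$, even from stack states that are far from the ideal noiseless traversal of $P$. This is especially delicate for the self-loop and top-comparison backtrack branches whose very purpose is to undo prior erroneous pushes, and which superficially look like they waste a step. Once this invariant is established the Chernoff argument is routine, and the constant $C$ is determined by $p_e$ and the desired tail decay in the usual way.
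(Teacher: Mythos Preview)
Your proposal is correct and follows essentially the same approach as the paper: the paper argues directly that each ``good'' oracle call either undoes a previous ``bad'' call or makes one step of forward progress along $P$, so reaching $t$ is guaranteed once the surplus of good over bad calls reaches $|P|$, and then applies a Chernoff bound with $N=3(|P|+\log(1/\varepsilon))$. Your potential function $\phi$ is simply a formalization of this same good-minus-bad accounting (with the same case analysis over the traversal branches), and your Chernoff step is identical in spirit; the only cosmetic difference is that the paper tracks the number $X$ of bad calls directly rather than naming a potential.
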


\begin{proof}
Define a call to the transition oracle, $T$, in our random walk as ``good'' if 
it returns the correct answer (i.e., $T$ does not lie)
and ``bad'' otherwise, so that each call is bad independently 
with probability at most $p_e$.
Note that if we are at a node, $v\in P$, and $v\not= t$, then 
a good call with either undo a previous bad call to stay at $v$
or it will move to the next vertex in $P$.
Also, if we are already at $t$, then a good call will keep us at $t$,
adding another copy of $t$ to the top of the stack, $S$.
Alternatively, if we are a node, $v\not\in P$, then
a good call with either undo a previous bad call to stay at $v$
or it will move back to a previous node in our traversal, which undoes
a previous bad call to move to $v$. Moreover, if we repeat this latter case,
we will eventually return back to a node in $P$.
Thus, every good call either undoes a previous bad call or makes progress
towards the target vertex, $t$.
Admittedly, if we 
are at a node, $v\in P$, then 
a bad call can undo a previous good call (e.g., which was to move to $v$ or stay
at $v$ if $v=t$). Also, a sequence of bad calls can even deviate from $P$ and 
return back to it---but because $G$ is a DAG, a series of bad calls cannot
return back to a previously visited vertex of $P$.
Thus, a path-guided pushdown random walk will successfully 
reach the target vertex, $t$, if the difference
between the number of good calls and bad calls is at least $|P|$,
the length of the path, $P$.
Let $X_i$ be an indicator random variable that is $1$ iff the $i$th call
to the transition oracle
is bad, and let $X=\sum_{i=1}^{N} X_i$.
Since each call is bad independent of all other calls,
we can apply a Chernoff bound
to determine an upper bound on the probability that
the difference
between the number of good calls and bad calls is less
than $|P|$, i.e., if $(N-X)-X<|P|$, that is, if $X>(N-|P|)/2$.
Further, note that $\mu=E[X]=p_e N\le (1/15)N$.
Then, for $N=3(|P|+\log (1/\varepsilon))$,
the failure probability is
\begin{eqnarray*}
\Pr\left(X > \frac{N-|P|}{2}\right) 
&=& 
\Pr\left(X > |P|+\frac{3}{2}\log ({1}/{\varepsilon})\right) \\ 
&=&
\Pr\left(X > \frac{1}{3}\left(3|P|+\frac{9}{2}\log ({1}/{\varepsilon})\right)\right) \\ 
&\le&
\Pr\left(X > \frac{1}{3}N\right) .
\end{eqnarray*}
Further, by a Chernoff bound from Dillencourt, Goodrich,
and Mitzenmacher~\cite{dillencourt} (Theorem~7),
\[
\Pr\left(X > \frac{1}{3}N\right) < 2^{-N/3} \le 2^{-\log (1/\varepsilon)} = {\varepsilon}.
\]
This establishes the proof.
\end{proof}

As mentioned above, the condition $p_e< 1/15$ can, with
the same asymptotic performance as in \cref{thm:path}
be replaced with any error probability bounded away from $1/2$.
We caution, however,
that while \cref{thm:path} provides a high-probability guarantee we
reach the target vertex, $t$, it does not provide a high-probability
guarantee we visit every vertex of the path, $P$.
There is an exception to this, 
which we describe now.

\begin{corollary}\label{cor:path-tree}
If $G$ is a tree, then the 
path-guided pushdown random walk in $G$
will visit every node in $P$.
\end{corollary}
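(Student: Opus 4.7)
My plan is to combine the stack invariant stated in \cref{sec:random} with the uniqueness of paths in trees. The invariant asserts that at every step the stack $S$ encodes a walk in $G$ starting from $s$, with each consecutive pair of entries either equal (corresponding to a stay-in-place step) or joined by a directed edge. Collapsing consecutive duplicates in $S$ therefore produces a sequence whose consecutive entries are distinct and joined by a directed edge of $G$, i.e., a directed walk. Because $G$ is a DAG, this collapsed sequence is in fact a simple directed path in $G$.

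Next I would invoke \cref{thm:path}: with high probability the walk terminates by accumulating sufficiently many copies of $t$ at the top of $S$. At that moment the bottom of the stack is $s$ and the top is $t$, so the collapsed simple directed path runs from $s$ to $t$. Since $G$ is a tree, the unique simple directed path from $s$ to $t$ is $P$ itself, so the collapsed sequence must coincide with $P$, and every vertex of $P$ appears on $S$ at termination.

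Finally, I would observe that a vertex is added to $S$ only by a push, and a push occurs only at a step in which that vertex is the current vertex of the walk. Hence every vertex of $P$ was the current vertex at some step, which is precisely what it means to be visited. The only nontrivial point in the argument is verifying the stack invariant itself, which reduces to a routine case inspection of the three update rules (backtrack pop, redundant-top pop, and push-and-advance); beyond \cref{thm:path}, it is the tree structure rather than any additional probabilistic reasoning that does the real work.
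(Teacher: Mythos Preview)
Your proposal is correct and takes essentially the same approach as the paper: invoke the stack invariant that $S$ records an actual path in $G$ starting at $s$, apply \cref{thm:path} to conclude the walk terminates with $t$ at the top, and then use uniqueness of paths in a tree to identify the (deduplicated) stack contents with $P$. The paper's proof is terser, but your additional remarks about collapsing consecutive duplicates and about pushes witnessing visits simply make explicit what the paper leaves implicit.
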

\begin{proof}
By definition, there is a unique path $P^*$ 
from the root of $G$ to the target vertex $t$. 
Thus, according to \cref{thm:path}, 
with high probability after $N$ steps, the path represented by the stack $S$ must equal 
the intended path $P^*$, which also must equal $P$.
\end{proof}

\begin{lemma}
\label{lem:terminate}
If it does not terminate earlier, the path-guided pushdown random walk will accumulate $\Theta(\log(1/\varepsilon))$ copies of the goal vertex $t$ on its stack, after $\Theta(|P|+\log(1/\varepsilon))$ steps, and therefore terminate, with high probability.
\end{lemma}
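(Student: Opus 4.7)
The plan is to reduce this lemma to the Chernoff bound argument already used in Lemma~\ref{lem:whp-goal}, enlarging the ``target'' of the walk to include not just arrival at $t$ but also the accumulation of a repetition count of $R = \Theta(\log(1/\varepsilon))$ on top of the stack. The key additional observation is that once the walk is at $v=t$, every good call to $T(t)$ returns ``$t\in P$ with next vertex $t$'' (the correct answer, because $t$ is the last vertex of $P$), which by the rules of the walk pushes $v=t$ onto the stack and increments the rep count atop the stack by one. Thus every good call during the accumulation phase contributes one new $t$ on top.

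A bad call at $v=t$, by contrast, either asserts $t \notin P$ (forcing a pop, decrementing the rep count) or reports an incorrect next vertex $w\neq t$ (forcing a pop if $t = S.{\sf top}()$, or otherwise pushing $t$ and moving away from $t$). Each such bad call can be cancelled by a subsequent good call, by exactly the same amortized accounting used in Lemma~\ref{lem:whp-goal}: bad calls either pop what good calls pushed or drift off $P$ in a way that a later pop repairs, and because $G$ is a DAG no cycle can spuriously accumulate extra copies of $t$.

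With these observations in hand I would pick constants $c_1, c_2$ (mirroring the $3$ and $15$ thresholds appearing in the proof of Lemma~\ref{lem:whp-goal}), and set $R = c_1 \log(1/\varepsilon)$ together with $N = c_2(|P| + \log(1/\varepsilon))$. Reaching the termination configuration requires the excess of good calls over bad calls during the first $N$ queries to be at least $|P| + R$: the first $|P|$ net good calls traverse $P$ as in Lemma~\ref{lem:whp-goal}, and the remaining $R$ build up the rep count atop the stack. Letting $X$ be the number of bad calls among those $N$ queries, success is the event $X \le (N - |P| - R)/2$. Since $\mathrm{E}[X] \le p_e N \le N/15$ and the calls are independent, the Chernoff bound of Dillencourt, Goodrich, and Mitzenmacher~\cite{dillencourt} that was invoked in Lemma~\ref{lem:whp-goal} bounds the failure probability by $\varepsilon$ for suitable $c_1, c_2$.

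The main obstacle I expect is formalizing the amortization claim that each bad call damages the ``distance to the termination configuration'' by at most one and each good call reduces it by at least one, uniformly across all four transition cases (push with $v=w$; push with $v\neq w$ and $v\neq S.{\sf top}()$; pop under a ``$v\notin P$'' report; pop under a ``$v\in P$'' report with $v=S.{\sf top}()$). Taking the potential function to be (length of the longest prefix of $P$ matched by the current stack) plus (rep count atop the stack once that match extends through $t$), a case analysis should yield exactly the $\pm 1$ bound required to plug into the Chernoff estimate above, after which the conclusion follows verbatim from the computation in Lemma~\ref{lem:whp-goal}.
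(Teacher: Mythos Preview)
Your approach is correct and yields the same bound, but the paper takes a shorter route. Rather than re-running the Chernoff argument with an enlarged target $|P|+R$, the paper observes that the walk on $G$ together with its repetition counter \emph{is itself} a path-guided pushdown random walk on a modified DAG $G'$, obtained by replacing each vertex $v$ of $G$ by a chain $(v,1),(v,2),(v,3),\dots$; a ``stay and push $v$'' step in $G$ corresponds to advancing one step along $v$'s chain in $G'$, and the new goal is $(t,R)$. Lemma~\ref{lem:whp-goal} then applies verbatim to $G'$ with a path of length $|P|+R=\Theta(|P|+\log(1/\varepsilon))$, so no case analysis needs to be repeated. Your direct re-derivation has the merit of making the $\pm 1$ amortization explicit---and indeed that amortization is precisely what justifies the $G\to G'$ correspondence---but the paper's packaging is more economical, reusing Lemma~\ref{lem:whp-goal} as a black box rather than reopening it.
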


\begin{proof}
This follows by applying \cref{lem:whp-goal} to a modified DAG $G'$ in which we expand each vertex $v$ of $G$ into a chain of copies $(v,1)$, $(v,2)$, $(v,3)$, $\dots$ of $v$ with a repetition count, as used in the termination condition of the algorithm.
\end{proof}

\begin{lemma}
\label{lem:incorrect}
The path-guided pushdown random walk will not terminate with any other vertex than $t$, with high probability.
\end{lemma}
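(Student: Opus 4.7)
The plan is to show that incorrect termination at some $v \neq t$ requires an improbable concentration of lies from the oracle at $v$, and then to apply a Chernoff bound in the spirit of \cref{lem:whp-goal}. Incorrect termination at $v \neq t$ means that at some step $\tau^*$, the top of the stack is $v$ with repetition count at least $R_{\text{threshold}} = \Theta(\log(1/\varepsilon))$.

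First, I would trace the evolution of the count on top of the stack during a maximal \emph{$v$-run}, i.e., a contiguous interval of steps during which the top of the stack is $v$. The count can only increase by pushing $v$ onto a $v$-topped stack, which requires the current vertex to be $v$ and the oracle's response to be a push; such a push is either a self-loop push (response is ``stay at $v$'') or a normal push (response is ``move to some $w \neq v$''). For any $v \neq t$, a self-loop push is necessarily a lie, since a truthful oracle returns $v$ as the next vertex only when $v = t$; hence each such push happens with probability at most $p_e$. A normal push sends the current vertex to $w$; to keep the $v$-run alive, the next operation at $w$ must be a pop (any push at $w$ would place a non-$v$ vertex on top and end the run), and this pop removes exactly the $v$ just pushed, returning the count to its previous value. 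Thus normal pushes contribute only a transient $+1$ bump that is immediately canceled, direct pops at $v$ only decrement, and the peak count attained during any $v$-run is at most $O(1)$ plus the number of self-loop pushes at $v$ in that run. Reaching $R_{\text{threshold}}$ therefore requires at least $R_{\text{threshold}} - O(1)$ self-loop lies at $v$.

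Next, I would bound the probability of that many lies at a single vertex. By \cref{lem:terminate}, the total number of steps before termination is $N = \Theta(|P| + \log(1/\varepsilon))$, so the number of lies at any fixed vertex is stochastically dominated by $\mathrm{Binom}(N, p_e)$ with $p_e < 1/15$. The same Chernoff bound from Dillencourt, Goodrich, and Mitzenmacher~\cite{dillencourt} used in \cref{lem:whp-goal} shows that this count exceeds $R_{\text{threshold}} - O(1)$ with probability at most $2^{-\Omega(R_{\text{threshold}})}$, provided the constant hidden in $R_{\text{threshold}}$ is chosen sufficiently large. A union bound over the at most $N$ candidate vertices for $v$ and the at most $N$ possible termination times then yields a total failure probability of $N^2 \cdot 2^{-\Omega(R_{\text{threshold}})} \leq \varepsilon$, since $N$ is polynomial in $1/\varepsilon$.

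The hardest part will be the case analysis in the middle step: rigorously verifying that within a $v$-run every normal push of $v$ must be immediately canceled by a pop of $v$, so that the count is effectively controlled by the self-loop lies alone. A secondary concern is that if $|P|$ is significantly larger than $\log(1/\varepsilon)$, then $N p_e$ may approach or exceed $R_{\text{threshold}}$; in that case one might need to scale the constant in $R_{\text{threshold}}$ appropriately, or refine the argument to bound only those lies that occur within a single $v$-run rather than across the whole execution.
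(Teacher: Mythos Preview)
Your approach is the same as the paper's---bound the probability of incorrect termination at each step via a tail bound on oracle lies, then union-bound over steps---and your core observation that every increment of the repetition count at $v\neq t$ is a self-loop lie is correct. But the argument as written has a genuine gap, and the issue you flag as ``secondary'' is in fact primary.

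The gap is in your second paragraph: bounding the number of self-loop lies at $v$ by $\mathrm{Binom}(N,p_e)$ and then applying Chernoff does not yield $2^{-\Omega(R_{\text{threshold}})}$ in general. Since $N=\Theta(|P|+\log(1/\varepsilon))$ while $R_{\text{threshold}}=\Theta(\log(1/\varepsilon))$, whenever $|P|\gg\log(1/\varepsilon)$ the mean $Np_e$ already exceeds $R_{\text{threshold}}$, and no Chernoff bound can make $\Pr[\mathrm{Binom}(N,p_e)\ge R_{\text{threshold}}]$ small. ``Scaling the constant in $R_{\text{threshold}}$'' cannot repair this, because $R_{\text{threshold}}$ is fixed at $\Theta(\log(1/\varepsilon))$ by the statement; only your other suggested fix---restricting attention to a single run---works. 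The paper's proof, though terse, is doing exactly this: it bounds the probability of incorrect termination \emph{at a fixed step}, which implicitly conditions on the current run, and then union-bounds over the $N$ steps.

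Your case analysis is also more tangled than it needs to be, and the ``normal push'' discussion is off: a Case~3 push of $v$ only occurs when $v\neq S.\mathsf{top}()$, so it \emph{starts} a $v$-run at count~$1$, and the very next step (whether a push of $w$ or a pop of $v$) ends that run---the pop does not ``keep the $v$-run alive.'' The clean observation is simply that once you are in the state $(\text{current}=v,\ S.\mathsf{top}()=v)$, the \emph{only} two transitions are self-loop ($+1$) and pop ($-1$); there is no normal push to worry about. Hence the count within such a contiguous block is a biased $\pm1$ walk with up-probability at most $p_e$, and by gambler's ruin the chance it climbs from~$1$ to $R_{\text{threshold}}$ before returning to~$0$ is at most $\bigl(p_e/(1-p_e)\bigr)^{R_{\text{threshold}}-1}$. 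A union bound over the at most $N$ such blocks then gives the result, using that $\varepsilon$ is polynomially small so that $\log N=O(\log(1/\varepsilon))$.
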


\begin{proof}
By another application of Chernoff bounds, and by the assumption that $\varepsilon$ is polynomially small, the probability that the algorithm terminates at step $i$ with an incorrect vertex is itself polynomially small, with an exponent that can be made arbitrarily large (independent of our original choice of $\varepsilon$) by an appropriate choice of the constant factor in the $\Theta(\log(1/\varepsilon))$ repetition count threshold used for the termination condition. By choosing this constant factor appropriately, we can make the probability of termination at any fixed step smaller than $\varepsilon$ by a factor at least as large as the high-probability bound on the number of steps of the walk in \cref{lem:terminate}. The result follows by applying the union bound to all the steps of the walk.
\end{proof}

\begin{proof}[Proof of \cref{thm:path}.]
Termination of the algorithm w.h.p. follows from \cref{lem:terminate}, and correctness once terminated follows from \cref{lem:incorrect}.
For a given high-probability bound $1-\varepsilon$, we apply \cref{lem:terminate} and \cref{lem:incorrect} with $\varepsilon/2$ and then apply the union bound to get a bound on the probability that the algorithm terminates with a correct answer.
\end{proof}

\subsection{Counterexample to Generalizations}

It is very tempting to consider a generalization of the path-guided pushdown random walk defined by a set of \emph{valid nodes} and \emph{goal nodes}, with the property that a non-noisy search starting from any valid node will reach a goal node and then stop. With a noisy oracle that determines whether a node is valid and if so follows a noisy version of the same search, one could hope that this generalized algorithm would quickly reach a goal state. The path-guided pushdown random walk would then be a special case where the valid nodes are exactly the nodes on the non-noisy search path from the root node. However, as we show in this section, this generalization does not work with the same fast high-probability termination bounds, unless additional assumptions are made (such as the assumptions giving the special case of the path-guided pushdown random walk).

Consider the following search problem: the DAG to be searched is simply a complete binary tree with height $\log_2 n$. One root-to-leaf path on this DAG is marked as valid, with the nodes on this path alternating between non-goal and goal nodes, so that the non-noisy search from the root stops after one step but other later stops would also produce a valid result. We have a noisy oracle with the following behavior:
\begin{itemize}
\item At a leaf node of the binary tree, or an invalid node, it always produces the correct result.
\item At a valid non-leaf node, it follows the same behavior as a non-noisy oracle with probability $14/15$: that is, it correctly identifies whether the node is a goal, and if not returns the unique valid child. With probability $K\log\log n/\log n$ (for some suitably large constant $K$) it returns the valid child regardless of whether the node is a goal. And with the remaining probability $1/15 - K\log\log n/\log n$ it returns the invalid child.
\end{itemize}

Now consider the behavior on this problem of a path-guided random walk, as defined above with a repetition-count termination condition. It will follow the valid path in the binary tree, with brief diversions whenever the noisy oracle causes it to walk to a node not on the path. At any goal node, it will wait at that node, accumulating more repetitions of the node, until either it achieves $C\log n$ repetitions (with a constant factor $C$ determined by the desired high probability bound) or the noisy oracle tells it to take one more step along the valid path. The probability of reaching $C\log n$ repetitions without taking one more step is $(1-K\log\log n/\log n)^{C\log n}\approx\exp -(K/C)\log\log n$, and by adjusting $K$ relative to $C$ we can make this probability less than $1/(2\log n)$. With this choice, by the union bound, the random walk will eventually take one more step from each goal node until finally reaching the leaf goal node. The expected number of steps that it waits at each goal node will be $\Theta(\log n/\log\log n)$, so the expected number of steps in the overall random walk, before reaching the leaf goal node and then accumulating enough repetitions to terminate, will be $\Theta(\log^2n/\log\log n)$.

It also would not work to use a termination condition of making enough steps to have high probability of reaching a goal node, and then stopping. If this number of steps is not enough to reach the leaf goal node, the probability of stopping at a non-goal node would be too high.

Thus, without either a change of algorithm or more assumptions on the valid and goal node sets, this generalized random walk can be made to take a number of steps that is  longer than $|P|+\log 1/\varepsilon$ by a non-constant factor.
\section{Transition Oracle for Binary Search Trees}
\label{sec:BST-appendix}

While path-guided pushdown random walks applies to many DAGs, 
many fundamental computational geometry algorithms rely on binary search trees. 
In this appendix, we present a transition oracle for BSTs. 


Say that we are attempting to find some value $x$ in a binary search tree $T$. By 
properties of BSTs, 
there is a unique path from the root to the node containing $x$. 
If there is no such node, there is still a unique path to the nonexistent leaf that would contain $x$.
When a call to the transition oracle $T$ is good, 
we expect it to correctly determine if we are on said valid path only using a constant number of comparisons. 
To do so, we add a constant amount of extra data to each node. 

For each node $v$ of the BST, 
let $P_v$ be the unique path from the root to $v$. 
Let $v.l$ be the lowest ancestor of $v$ whose right child is in $P_v$ (or a sentinel representing $-\infty$). 
Likewise, let $v.r$ be the lowest ancestor of $v$ whose left child is in $P_v$ (or a sentinel representing $\infty$). 
One of $\{v.l, v.r\}$ is $v$'s parent. 
Observe that, if we have correctly navigated to $v$,
$x$ must be in between the values held by $v.l$ and $v.r$. 
Thus, by adding two pointers to each node, 
the transition oracle can determine if we are on the correct path using two comparisons. 
See \cref{fig:bst-oracle} for a visual depiction of the pointers $v.l$ and $v.r$ and their effect.
Augmenting the tree with these pointers can be done in $O(\log n)$ extra time per insertion and deletion. 
With slight modification, we can also support inexact queries such as for predecessors or successors. 

\begin{figure}
    \centering
    \includegraphics[width=0.65\linewidth]{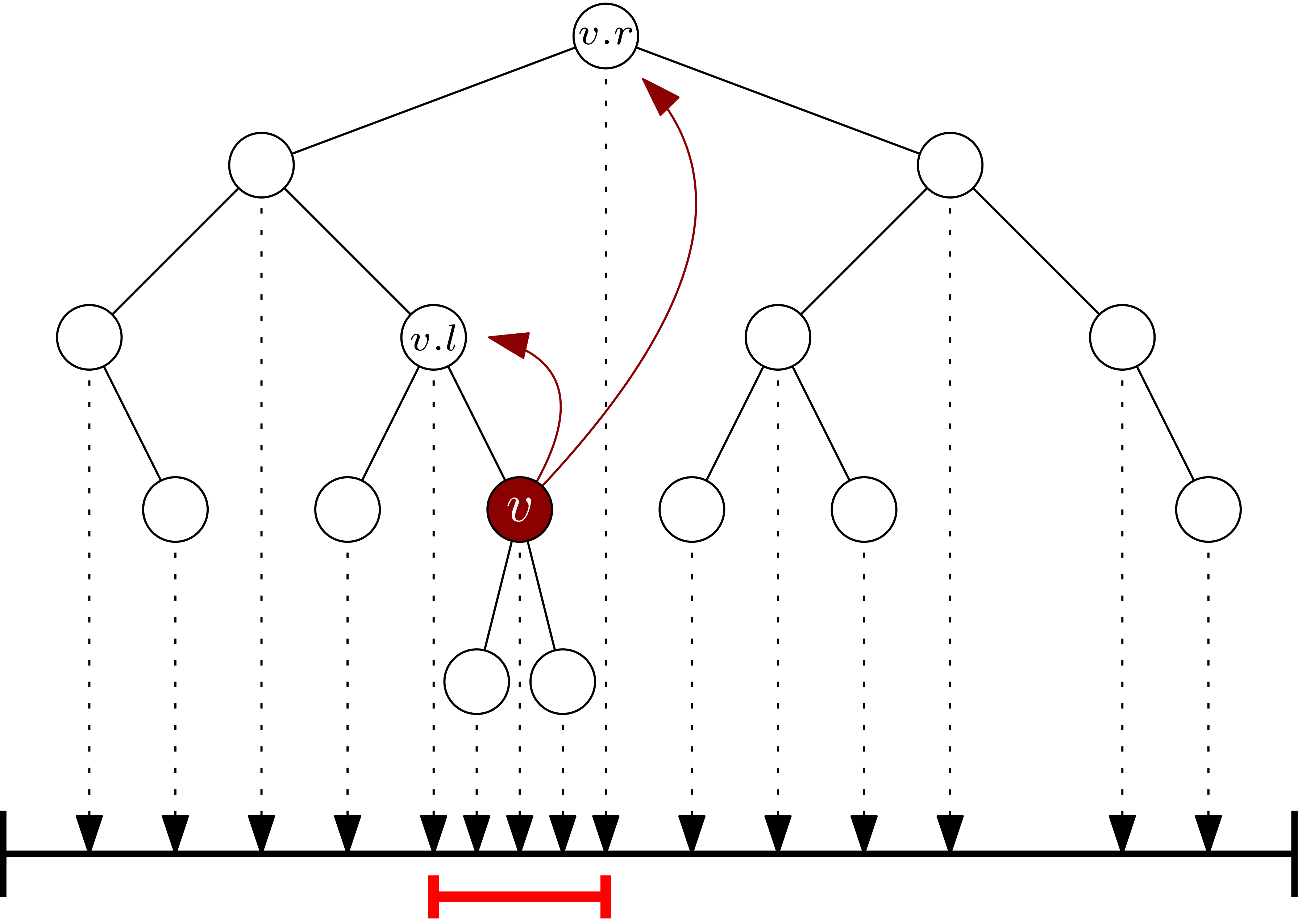}
    \caption{Here we have depicted the $v.l$ and $v.r$ pointers for the node $v$. Notice that $v.l$ is $v$'s first ancestor whose value is smaller than $v$'s. Likewise, $v.r$ is its first ancestor whose value is larger than $v$'s. The dotted lines map nodes to their relative ordering on a number line. It is clear that the values held by $v.l$ and $v.r$ form the interval of possible values that could be held by nodes in $v$'s subtree. If the query is located strictly in this interval and if it is in the tree at all, then it must be within $v$'s subtree. If the query is not in this interval, then it cannot be in $v$'s subtree and we have made an errant comparison earlier. }
    \label{fig:bst-oracle}
\end{figure}

\section{Non-Noisy Random Incremental Constructions}\label{sec:RIC-expl}
In this section we summarize known work on random incremental constructions, used as the basis for some of our results.

\subsection{Trapezoidal Decomposition with History DAG}
In the trapezoidal decomposition problem, we wish to decompose the space based on a set of non-intersecting line segments in the plane. By drawing vertical lines at each segment endpoint, we can split the space into trapezoids (some of which are unbounded). To do this efficiently, we permute the set of segments and insert them one at a time. Our method of solving this problem, which we will see again in \cref{sec:3D-hull} and \cref{sec:DT} is to consider a random incremental construction that uses a \emph{history DAG}. Each node of the structure represents a trapezoid. Leaves represent trapezoids that exist in the current decomposition. Internal nodes represent trapezoids destroyed by a segment.  

For simplicity, we imagine that a bounding box exists that contain all line segments. We also only consider the case of non-intersecting segments. The initial history DAG is a single node that represents the bounding box. Inserting the first segment splits the space into four trapezoids, which each become a child of the initial node in the history DAG. Likewise, each iteration of the RIC, each destroyed trapezoid points to the one or more trapezoids that replace it. Additional pointers connect trapezoids that are adjacent in the trapezoidal map, so that we can walk around the structure easily. 

It remains to describe how to use the history DAG to determine what trapezoids a new segment destroys, and what new ones to create. To insert a segment, we take both of its endpoints and navigate down the history DAG using them. Each node in the history DAG contains at most four children that cover the parent trapezoid but do not overlap. Thus, when navigating down the DAG, there is always a unique child that contains the point we are searching for. When we have found the leaf trapezoids that contain both endpoints, we walk along the segment that connects the endpoints, destroying and merging trapezoids as needed. The work spent on the algorithm then revolves around navigating down the tree and adding nodes. 

Using a backwards analysis, we can bound the costs associated with modification of the tree and search. 
It is known that a trapezoidal decomposition of $n$ segments has at most $O(n)$ trapezoids~\cite{decomputational}. 
Also, observe that each trapezoid is defined by at most four segments. 
Thus, each trapezoid in our final decomposition would not exist in the previous iteration 
if any of its supporting segments were inserted last. 
The probability one of those segments was added last is $\leq 4/n$,
meaning this trapezoid was created this iteration with probability $\leq 4/n$. 
There are $O(n)$ trapezoids, so the expected number of 
trapezoids created this iteration is at most $O(n) \times 4/n = O(1)$. 
Over all $n$ iterations, our DAG is of size $O(n)$ in expectation. It follows that the number of destroyed trapezoids is at most $O(n)$ in expectation as we can only destroy trapezoids after they are created. 

We use a similar analysis to bound point-location cost. 
Consider some fixed point $p$ whose location is independent 
of previously inserted segments. 
Say we wish to find $p$'s location in the final trapezoidal decomposition. 
In iteration $i$ of the algorithm, the probability that the addition of the new segment 
destroys the trapezoid $\Delta$ that previously housed $p$ is $\leq 4/i$ 
for similar reasons to what was described above. 
If this occurs, 
then $p$ must be located in exactly one of the child
trapezoids produced from $\Delta$. 
This increases the history DAG point location cost for $p$ by 1.
Thus, total point location cost for $p$ in the final decomposition is at most 
$\sum_{i=1}^n O(1)\times 4/i = O(\log n)$ in expectation. 
We perform point location $2n$ times over the course of the algorithm. Performing point-location at iteration $i$ will take at most $O(\log i) \subseteq O(\log n)$ time in expectation. Therefore, total point location cost is at most $O(n\log n)$ in expectation.

\subsection{3D Convex Hull with  History DAG}\label{sec:history-DAG-ex}
We consider the RIC with a history DAG described by Mulmuley \cite{mulmuley1994computational} for Voronoi diagrams, later adapted to 3D convex hulls. 
In this search structure, each node represents a facet. Root nodes represent each facet of the initial tetrahedron. Each leaf of the structure represents a facet of the current hull. Upon inserting a new point, we delete a set of facets $X$ and insert a new set of facets $Y$. In the DAG, this is represented by initializing a new node for every $f \in Y$. Additionally, every node representing each $f \in X$ has a pointer to each node representing every $f \in Y$.

To obtain faster search times, we also organize the new nodes representing $Y$ into a \emph{radial search structure}. Let $p$ be the newly added point. As will be clear soon, we store the new edges incident to $p$ in the radial search structure rather than the faces. These edges are ordered cyclically around $p$. Convert this cyclic ordering into a linear ordering by arbitrarily picking the ``least clockwise'' edge and organize them into a search structure (e.g. sorted array, binary search tree, skip list, etc.). This structure is then embedded into the history DAG along with the nodes for $Y$.


Now we show how to query this structure. After we create the initial tetrahedron, we choose a coordinate system such that its origin lies in the tetrahedron (thus the origin will be located within the current hull at every step of the RIC). Imagine shooting a ray from the origin to our query point $q$. We start our query at the node representing the facet of the tetrahedron which the ray pierces. Likewise, we descend to the child that represents the next facet whose interior our ray passes through, and so on until we reach a leaf node. By general position, the ray only passes through the interior of a facet, never the edge or at a vertex. In this way, we have a query that designates a unique path down the history DAG, which is required to apply path-guided pushdown random walks.

To make this process more efficient, we rely on the radial search structures. Say we are at some node $v$ of the history DAG and are querying a radial search structure to find the unique child to descend to. Let $p$ be the point that created the edges in the current radial search structure and let line $\overrightarrow{Op}$ be a ray that starts at the origin and passes through $p$. From the edges incident to $p$, we define a system of half-spaces, all of which are bounded by $\overrightarrow{Op}$ (think of them as like wedges of an orange surrounding $\overrightarrow{Op}$ where each plane that slices the wedges is drawn between an edge incident to $p$ and $\overrightarrow{Op}$). The goal of this search is to determine which face our query ray $\overrightarrow{Oq}$ pierces. Each node of the radial search structure represents one of the half-spaces in the system. An orientation test determines which side of the half-plane the query $q$ is, and so determines which direction to recurse in the tree. At the end of the search, we will have found two adjacent half-spaces that bound $q$ on either side. Because the edges incident to $p$ defined these half-spaces, this corresponds exactly to the face that the ray $\overrightarrow{Oq}$ pierces. Thus, we have found the unique child of $v$ that we were looking for. 


We use the history DAG to determine the set of facets $X_i$ that conflict with our new point $q$ each iteration. Notice that the above query only finds one such facet. However, given a single conflicting facet, we can simply walk around the hull and perform orientation tests on adjacent facets to recover the set of conflicting facets in $O(|X_i|)$ time in the non-noisy setting. Let $\ell_i$ be the length of the path that $q$ took in the history DAG to find a conflicting facet. Then it takes $O(\ell_i + |X_i|)$ time to update the history DAG. An analysis by~\cite{mulmuley1994computational} (also see~\cite{decomputational}) shows that $\sum_{i=1}^n \ell_i = O(n\log n)$ and $\sum_{i=1}^n |X_i| = O(n)$ in expectation. Then the total runtime of the algorithm in the non-noisy setting is $O(n\log n)$ in expectation.


\section{Lower Bounds}
\label{sec:lower-bounds}
In this section we prove lower bounds on computing with noisy primitives showing that for finding the closest pair of points (which takes $O(n)$ time in the non-noisy setting in a model of computation allowing integer rounding) and for constructing line arrangements (which takes time $O(n^2)$ in the non-noisy setting) it is not possible to avoid the logarithmic time penalty imposed by the trivial repetition strategy.

\subsection{Closest Points}
\label{sec:closest-lb}

In this section we prove that computing the closest point w.h.p. with noisy primitives requires $\Omega(n\log n)$ time; thus, our $O(n\log n)$ time algorithm is optimal, despite the existence of non-noisy closest point algorithms taking time $O(n)$. For this section, we make no assumption that the computation can be modeled as a comparison tree or decision tree (unlike past lower bounds for noisy minimum-finding). Instead, we assume merely that the only direct access to the input data is through noisy primitives. We assume that each primitive takes as input $O(1)$ data points and produces an arbitrary value (not necessarily Boolean) as output, which is correct with probability $1-p$ and incorrect with probability $p$, for some constant error probability $p<\tfrac12$. For our lower bound, we restrict the incorrect values to be values that could have been produced by a non-noisy version of the primitive on different data points; restricting the model in this way only makes the lower bound stronger. Once a value has been returned from a noisy primitive, the algorithm is free to perform arbitrary computation on it. However, with this restriction, incorrect values may be chosen adversarially.

\begin{theorem}
\label{thm:closest-lb}
Let $c>0$. Then in the model of computation described above with constant error probability $p$, computing a closest point among $n$ 2D points, with probability $\ge 1-n^{-c}$, requires calls to $\Omega(n\log n)$ noisy primitives, even for the expected number of calls made by a randomized algorithm, and
even for noisy primitives that (when erroneous) produce an answer that would be valid for an arbitrarily small perturbation of the input relative to the closest pair distance.
\end{theorem}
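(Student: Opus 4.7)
The plan is to reduce from noisy minimum-finding on $n$ numbers, for which Feige, Raghavan, Peleg, and Upfal~\cite{feige1994computing} established an $\Omega(n\log n)$ lower bound that holds in expectation against randomized algorithms succeeding with probability $1-n^{-c}$. Since noisy comparisons depend only on the relative order of input values, the hard instances of this lower bound can be rescaled to lie in any interval $[0,\epsilon]$ we like. First, I would fix such an $\epsilon\ll 1$ and encode a min-finding instance $v_1,\dots,v_n\in[0,\epsilon]$ as the closest-pair instance on the $2n$ points $p_i=(3i,0)$ and $q_i=(3i,1+v_i)$. Every pair other than $(p_i,q_i)$ has distance at least $3$, while $d(p_i,q_i)=1+v_i\in[1,1+\epsilon]$, so the unique closest pair is $(p_j,q_j)$ with $v_j=\min_i v_i$. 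An algorithm that outputs the closest pair correctly with probability $1-n^{-c}$ therefore identifies the minimum with the same probability.

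Next, I would simulate every noisy primitive invoked by the closest-pair algorithm using $O(1)$ noisy comparisons of the underlying $v_i$. Each primitive takes $O(1)$ of the constructed points as input; the $x$-coordinates and the zero $y$-coordinates of the $p_i$'s are constants known to the reduction, so the primitive's correct value is a deterministic function of at most $O(1)$ of the $v_i$. I would evaluate this function from a constant number of noisy comparisons, first majority-boosting each comparison to a tiny error rate so that the composed simulated primitive's overall error is at most $p$, matching the model's requirement. A closest-pair algorithm making $T(n)$ noisy primitive calls then yields a min-finder using $O(T(n))$ noisy comparisons with the same success probability, forcing $T(n)=\Omega(n\log n)$, matching the upper bound of \cref{sec:closest} in expectation as well as worst case.

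The main obstacle is ensuring that every erroneous simulated primitive output agrees with the truth on a perturbation of the input that is arbitrarily small relative to the closest-pair distance, as the adversary model in the theorem demands. Because all $v_i$ lie in $[0,\epsilon]$, any wrong noisy comparison on $v_i$ and $v_j$ can be interpreted as a swap of their positions inside a window of width $\epsilon$, realized by an $\epsilon$-perturbation of the $q$-coordinates for those two points; composing the $O(1)$ comparisons supporting one primitive still produces an $O(\epsilon)$-perturbation of the 2D input. Since the closest-pair distance is at least $1$ and $\epsilon$ may be taken arbitrarily small, every erroneous simulated primitive output equals the non-noisy value on a perturbation of vanishing magnitude relative to the closest-pair distance, satisfying the restriction in the theorem and completing the reduction.
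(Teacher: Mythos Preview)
Your reduction has a genuine gap in the simulation step. The model in the theorem allows primitives that take $O(1)$ data points and return an \emph{arbitrary} (not necessarily Boolean) value; the preamble explicitly motivates this by integer rounding. A primitive such as ``return $\lfloor 10^{100}\cdot(\text{$y$-coordinate of }q_i)\rfloor$'' has a correct output depending on the actual magnitude of $v_i$, not merely on the relative order of the $v_j$'s, and cannot be computed from any finite set of comparisons among the $v_j$. Even for Boolean primitives the claim fails: ``is $d(p_i,q_i)<1+\epsilon/2$?'' reduces to the threshold query ``is $v_i<\epsilon/2$?'', which is again not a function of the order type. Since the closest-pair algorithm may issue such primitives adaptively, your simulator has no way to answer them correctly with probability $1-p$ using only a noisy comparison oracle, and the reduction to the Feige--Raghavan--Peleg--Upfal comparison lower bound does not go through. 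That lower bound constrains the noisy \emph{comparison} model, not the richer primitive model stated in the theorem.

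The paper avoids this obstacle by arguing directly rather than by reduction. It fixes a base configuration $S$ of $n/2$ well-separated near-unit pairs, defines a hard input distribution by perturbing one random pair to be the closest, and lets the adversary's erroneous answer to \emph{any} primitive be simply that primitive's value on the unperturbed $S$. Because the adversarial response is defined uniformly for every primitive, no simulation or classification of primitives is needed: one only counts how many calls touch each pair, observes that with $o(n\log n)$ calls at least $n/4$ pairs are touched at most $\kappa\log n$ times, and uses the event ``all calls touching the chosen pair are erroneous'' to make the transcript coincide with the transcript on $S$, forcing a wrong output with probability exceeding $n^{-c}$. Yao's principle and a Markov argument then extend the bound to randomized algorithms and to expected query counts.
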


\begin{proof}
Let $n$ be any multiple of four, and consider any point set $S$ in general position, grouped into $n/2$ pairs of points, each pair approximately at unit distance (as our general position assumption allows), with all other distances larger. We define a hard distribution on random instances $R$ to the closest pair problem by choosing one of the $n/2$ pairs of points uniformly at random and perturbing this pair to be closer than all other pairs. Additionally, we specify a noisy primitive that, with probability $p$, answers with the result that would be correct for $S$ instead of for the perturbed input $R$. By choosing the perturbations appropriately, this erronous result will be a result that would be valid for an arbitrarily small perturbation of the input.  We will first show that, for this input distribution and this adversary, a deterministic algorithm requires $\Omega(n\log n)$ noisy primitives to solve the problem correctly with the given probability. 

Let $t$ denote the maximum number of data points that participate in a single primitive.
For some suitably small constant $\kappa$, we have $p^{\kappa\log n}>2n^{-c}$: that is, if we make $\kappa\log n$ calls to noisy primitives, there is a somewhat large probability that all of them produce erroneous outputs. Now suppose that a deterministic algorithm makes at most $\tfrac{\kappa}{4t}n\log n$ calls to noisy primitives for any $n$-point input, and consider the sequence of calls that it would make for point set $S$; this sequence is deterministic as our noise model will not affect the result of any primitive on input $S$. Among the $n/2$ unit-distance pairs of points in $S$, let $F$ be the set of pairs whose two points are involved in at most $\kappa\log n$ calls to primitives in this sequence of calls, and such that this pair is not the output of the algorithm on input $S$. The average number of calls that involve at least one point from a pair is at most $\tfrac{\kappa}{2}\log n$ per pair, and by Markov's theorem more than half of the pairs have a number of calls that is at most twice the average. After removing the output of the algorithm on $S$ from this set of pairs with few calls, we have that $|F|\ge n/4$.

For the two events such that (1) our hard distribution chooses a pair in $F$ to make closest and (2) the first $\kappa\log n$ calls involving the two points from this pair are all noisy, the first event happens with probability $\ge 1/2$. After conditioning on the choice of pair, the second event happens with probability $>2n^{-c}$. Therefore, both events happen with probability $>n^{-c}$.  For all such combinations of events, the deterministic algorithm will receive the same results of primitives as it would with input $S$, and will produce the same output as it would with input $S$, but this output is not in $F$. Thus, the algorithm will be mistaken with probability $>n^{-c}$. Because the algorithm was arbitrary, every deterministic algorithm fails to have high probability of correctness on this input distribution.

Yao's principle~\cite{Yao77} allows us to convert this lower bound on random inputs and deterministic algorithms into a lower bound on random algorithms, as stated in the theorem. This principle is closely related to the minimax principle for zero-sum games and the duality principle for linear programs. It states that, for arbitrary definitions of the cost of an algorithm, the minimum expected cost for a worst-case random distribution on inputs and a resource-bounded deterministic algorithm chosen for that distribution, equals the minimum expected cost for a random algorithm with the same resource bound against a deterministic input chosen for that algorithm. It follows that any valid limit on the performance that can be achieved, proven by finding an input distribution that prevents deterministic algorithms from achieving any better performance, applies as well to the performance that can be achieved by a random algorithm on its worst-case input.

The probability of an incorrect result of an algorithm with noisy primitives is the expected cost, for a cost that is 0 when the algorithm is correct and 1 when it is incorrect. By Yao's principle, applied to the input distribution described above, any randomized algorithm that always makes at most $\tfrac{\kappa}{4t}n\log n$ calls to noisy primitives has a worst-case input causing its probability of an incorrect result to be too high. We can extend this limitation to algorithms whose number of calls to primitives is itself random, as follows.  If a randomized algorithm has an expected number of calls that is at most $\tfrac{\kappa}{8t}n\log n$, then by Markov's inequality it has probability $\ge 1/2$ of making a number of calls that is at most $\tfrac{\kappa}{4t}n\log n$, and therefore of making too few calls to achieve high-probability correctness. By adjusting the parameters of the argument we can ensure that it has probability $\ge 1/2$ of making so few calls that it is incorrect with probability $\ge 2n^{-c}$, so that even if it is perfectly correct in the cases where it makes more than twice its expected number of calls, it will still fail to achieve the given high-probability bound.
\end{proof}

The same lower bound applies to the problem of detecting whether a system of unit disks has a pair of disks that intersect, by scaling the same hard distribution so that, for unit disks centered at its points, the only pair that can intersect is the randomly selected closest pair.

\subsection{Line Arrangements}
\label{sec:arrangement-lb}

In this section we prove a lower bound on the construction of line arrangements. The standard primitive needed for this construction, in the non-noisy case, is an \emph{orientation test}, which for three lines determines whether they meet at a single point or, if not, how they are ordered around the triangle that they form. (This is the projective dual to an operation that takes as input three points and determines whether they are collinear or, if not, how they are ordered around the triangle that they form, and constructing the arrangement is equivalent by projective duality to the problem of determining the order type of a set of $n$ points~\cite{Epp-18}.) Unlike the lower bound for closest pairs, which was agnostic about the primitives used, our lower bound for arrangements will be specific to this primitive.

\begin{theorem}
\label{thm:arrangement-lb}
Constructing an arrangement of $n$ lines with high probability, using a noisy three-line orientation test, requires $\Omega(n^2\log n)$ calls to this test, even for the expected number of calls made by a randomized algorithm.
\end{theorem}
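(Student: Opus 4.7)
The plan is to adapt the strategy of \cref{thm:closest-lb} to line arrangements, with a hard input distribution supported on $\Theta(n^2)$ alternative realizable arrangements rather than the $\Theta(n)$ alternative perturbed-pair configurations used in the closest-pair proof. Because the trivial repetition strategy gives the matching upper bound $O(n^2\log n)$, the family of alternatives must have size $\Theta(n^2)$ so that the same Markov/Chernoff averaging produces $\Omega(n^2\log n)$ rather than $\Omega(n\log n)$.

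To build the hard distribution, I start with a base arrangement $S$ of $n$ lines in general position containing $\Theta(n^2)$ triangular faces; arrangements achieving the maximum $\Theta(n^2)$ triangular faces are classical (Roberts; F\"uredi and Pal\'asti). For each triangular face $F$, bounded by three lines $\ell_i,\ell_j,\ell_k$, the \emph{mutation} of $S$ at $F$ is the realizable arrangement $S_F$ obtained by flipping the orientation of the single triple $T_F=\{\ell_i,\ell_j,\ell_k\}$ and leaving every other triple's orientation unchanged; the combinatorial characterization of realizable single-triple flips is exactly that $T_F$ bounds a triangular face, so $S_F$ is guaranteed realizable. Let $D$ be the collection of these mutations, so $|D|=\Theta(n^2)$, and let the hard distribution pick $F\in D$ uniformly and present $S_F$. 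Equip the noisy orientation-test with the adversarial rule that, with probability $p$ and independently for each query, it returns the orientation of the queried triple \emph{in $S$}; this is a valid noise model because $S$ and $S_F$ disagree only on $T_F$, so on queries of triples other than $T_F$ the $S$-consistent answer is also the $S_F$-consistent answer, and on queries of $T_F$ the $S$-consistent answer is the permitted erroneous one.

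The remainder of the argument mirrors \cref{thm:closest-lb}. For any deterministic algorithm making at most $c_1 n^2\log n$ orientation tests on input $S$, each test queries exactly one triple, so the total number of queries summed over $\{T_F : F\in D\}$ is at most $c_1 n^2\log n$; by Markov's inequality at least half of $D$ consists of triples queried at most $c_2\log n$ times. For each such under-queried $T_F$, on input $S_F$ under the adversarial noise, with probability at least $p^{c_2\log n}\ge 2n^{-c}$ every query on $T_F$ returns the $S$-consistent answer, so the algorithm's full execution trace coincides with its trace on $S$ and it outputs the arrangement of $S$, which differs from that of $S_F$. On the uniform distribution over $D$ the algorithm thus errs with probability at least $n^{-c}$, contradicting high-probability correctness. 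Yao's principle and a further Markov step on the (random) number of primitive calls extend this to the stated $\Omega(n^2\log n)$ lower bound on the expected number of orientation tests made by any randomized algorithm.

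The hard part is producing the family $D$: the argument requires $\Theta(n^2)$ realizable arrangements each differing from the base in exactly one triple orientation. Flipping an arbitrary single triple generically violates the rank-$3$ oriented-matroid axioms and would yield a non-realizable configuration, which would invalidate the noise-model construction. What saves the plan is the oriented-matroid characterization of realizable single-triple flips as triangular-face mutations, together with the classical $\Theta(n^2)$ tight bound on the number of triangular faces in an arrangement of $n$ lines, which provides exactly the right number of alternatives to push the Markov averaging to $\Omega(n^2\log n)$ and match the trivial-repetition upper bound.
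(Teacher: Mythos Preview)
Your proposal is correct and follows essentially the same architecture as the paper's proof: build a base arrangement together with a large family of alternative arrangements each differing from the base in exactly one triple orientation, take the hard distribution uniform over the family, let the adversary's noisy answer be the base-consistent one, and finish with the Markov/union-bound/Yao chain exactly as in \cref{thm:closest-lb}.

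The one substantive difference is where the family of alternatives comes from. The paper invokes the Erickson--Seidel construction, which supplies an arrangement $A_n$ with $\tfrac19 n^3+O(n)$ \emph{collapsible} triples---triples that by construction can be individually reversed while leaving every other orientation test unchanged---so realizability of each alternative is immediate from the cited reference. You instead take a F\"uredi--Pal\'asti-type arrangement with $\Theta(n^2)$ triangular faces and appeal to the oriented-matroid fact that a mutation at a triangular face of a \emph{realizable} rank-$3$ arrangement is again realizable. That fact is true, but it is not as trivial as your one-line invocation suggests: the obvious ``translate $\ell_i$ toward the opposite vertex'' motion can sweep over extraneous intersection points $\ell_a\cap\ell_b$ with $a,b\ge 4$, and one needs a slightly more careful deformation (or a reference) to certify that some motion works for every triangular face. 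The paper's choice sidesteps this issue entirely. On the other hand, your family has size $\Theta(n^2)$, which is exactly the threshold needed for the averaging step to yield $\Omega(n^2\log n)$, whereas the paper's $\Theta(n^3)$ triples give the same bound with slack to spare; so your construction is tighter, while the paper's is more self-contained.
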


\begin{proof}
Erickson and Seidel~\cite{EriSei-DCG-95} construct an arrangement $A_n$ of $n$ lines, no three having a common intersection, for which there are $\tfrac19n^3+O(n)$ \emph{collapsible triangles}, triples of lines whose orientation can be changed so that they have a common intersection or so that their triangle is reversed without changing the results of any other orientation test. We define a hard input distribution for the problem of constructing arrangements by choosing one of the collapsible triangles in $A_n$ and reversing it, and using a noisy orientation test that, when erroneous, returns the orientation in $A_n$ instead of in this perturbed orientation.

Then, as in the proof of \cref{thm:closest-lb}, consider any deterministic algorithm running on this input distribution, and let $\Delta_0$ be the collapsible triangle (if one exists) that this algorithm would determine to have been reversed when all its queries are answered correctly for $A_n$ instead of for the perturbed input.
For any constant $c$, if the algorithm performs at most $\tfrac1{18}cn^2\log n$ calls to the noisy orientation test, then at least half of the collapsible triangles in $A_n$ other than $\Delta_0$ will be tested at most $c\log n$ times. This leads to probability $\Omega(1/n^c)$, that all of the tests to the reversed triangle are erroneous and that the algorithm will fail to correctly identify the reversed triangle. By adjusting $c$ this failure probability can be made greater than any high-probability bound; thus, no deterministic algorithm can succeed with high probability against this input distribution. An application of Yao's principle extends this impossibility result from deterministic to randomized algorithms. We omit the detailed calculations as they are the same in essential respects as for \cref{thm:closest-lb}.
\end{proof}

\begin{theorem}
Determining whether a given set of $n$ points has three collinear points w.h.p., using a noisy three-point orientation test, requires $\Omega(n^2\log n)$ calls to this test, even for the expected number of calls made by a randomized algorithm.
\end{theorem}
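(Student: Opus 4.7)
The plan is to mirror the lower bound for constructing line arrangements (Theorem~\ref{thm:arrangement-lb}) by transporting the Erickson--Seidel construction through projective duality. Under the standard point--line duality, three points are collinear if and only if their dual lines are concurrent, and a noisy three-point orientation test is the dual of a noisy three-line orientation test. Hence the arrangement $A_n$ of Erickson and Seidel, with its $\tfrac{1}{9}n^3+O(n)$ collapsible triangles, dualizes to a set $P_n$ of $n$ points in general position containing $\Theta(n^3)$ \emph{collapsible triples}: triples whose orientation can be flipped, or made collinear by an arbitrarily small perturbation, without changing the result of any orientation test on any other triple.

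Using $P_n$, I would build the hard input distribution exactly as in the proof of Theorem~\ref{thm:arrangement-lb}: pick one of the $\Theta(n^3)$ collapsible triples uniformly at random and apply an arbitrarily small perturbation to make precisely that triple collinear, leaving every other orientation unchanged. The noisy orientation test, when it errs (probability $p$), returns the answer that would have been correct for $P_n$ rather than for the perturbed input; in particular, on the planted collinear triple the noisy test reports a non-degenerate orientation with probability $p$. An algorithm that correctly answers \emph{collinearity detection} must identify that this specific triple is collinear, so the analysis reduces to the same combinatorial game played in the arrangement lower bound.

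Now fix any deterministic algorithm making at most $\tfrac{1}{18}cn^2\log n$ calls to the noisy orientation test on input $P_n$. Each call touches a single triple, so the total testing effort is $O(n^2\log n)$ distributed across $\Theta(n^3)$ collapsible triples; by Markov, at least half of them are queried at most $c\log n$ times. With constant probability the random planted collinear triple lies in this lightly-queried set, and then with probability at least $p^{c\log n}=n^{-c\log(1/p)}$ \emph{every} query on that triple errs and returns the $P_n$ answer. The algorithm then sees the transcript it would see on $P_n$ (which has no collinear triple) and must answer ``no collinear triple,'' failing. For an appropriate constant $c$ this failure probability exceeds any prescribed $n^{-c'}$ high-probability threshold, ruling out deterministic algorithms with this query budget. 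Yao's principle, exactly as invoked in the proofs of Theorem~\ref{thm:closest-lb} and Theorem~\ref{thm:arrangement-lb}, upgrades this to the stated lower bound for randomized algorithms measured by expected number of queries.

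The main obstacle is the duality step: I must verify that projective duality preserves the ``collapsible'' property quantitatively, i.e., that a perturbation of three lines of $A_n$ that flips only their mutual orientation dualizes to a perturbation of three points of $P_n$ flipping only that triple's orientation. This is a clean consequence of the fact that duality preserves the entire order type (the signed list of all triples), so the bijection between collapsible triangles of $A_n$ and collapsible triples of $P_n$ is immediate and the count $\Theta(n^3)$ is transferred without loss. Once this is in hand, all remaining calculations are essentially identical to those for Theorem~\ref{thm:arrangement-lb} and may be omitted.
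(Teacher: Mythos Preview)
Your approach via projective duality is exactly the one the paper takes, and the counting-plus-Yao argument you sketch is correct. There is, however, a real gap in your assertion that the dual set $P_n$ is automatically in general position. The basic Erickson--Seidel arrangement $A_n$ consists of three families of parallel lines; in the affine plane no three are concurrent, but projectively each family meets at a common point at infinity. Duality therefore sends $A_n$ to $n/3$ points on each of three lines, so the unperturbed $P_n$ already contains $\Theta(n^3)$ collinear triples that are \emph{not} the collapsible ones. Your key sentence ``the algorithm then sees the transcript it would see on $P_n$ (which has no collinear triple) and must answer `no collinear triple''' then fails: the correct answer on $P_n$ is already ``yes,'' and indistinguishability from $P_n$ proves nothing.

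The paper addresses precisely this point. It observes that Erickson and Seidel themselves give a modified construction in which the three supporting lines are replaced by shallow convex curves, destroying all the extraneous collinearities while preserving the $\Theta(n^3)$ collapsible triples. With that modified $P_n$ as the base instance, your argument goes through verbatim. So the missing ingredient is not the duality or the Markov/Yao calculation, both of which you have, but the recognition that you need a base point set with \emph{no} collinear triples and a citation to the variant of the construction that provides one.
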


\begin{proof}
The proof is essentially the same as \cref{thm:arrangement-lb} except that we collapse one collapsible triangle rather than reversing it, and then construct the projective dual system of points. In its most basic form, the dual of the construction of Erickson and Seidel~\cite{EriSei-DCG-95} has many collinearities separate from the collapsible triangles (it consists of $n/3$ points on each of three parallel lines, with the collapsible triangles formed by triples of one point from each of these three lines). However, as Erickson and Seidel describe, it can be modified by replacing these lines by shallow convex curves to eliminate these extra collinearities while still preserving the collapsibility of the collapsible triangles.
\end{proof}

For constructing line segment arrangements, additional primitives are required to determine whether two line segments cross or whether one ends before reaching the point where it would cross another segment. For instance, this may be done with orientation tests for triples of endpoints of the segments, as well as orientation tests of triples of lines passing through segments. The same lower bound applies, to a line segment arrangement formed by intersecting Erickson and Seidel's arrangement $A_n$ with a large bounding disk, as long as these additional primitives are \emph{uninformative}: they do not supply any information about whether any collapsible triangle has been reversed. For instance, orientation tests on line segment endpoints are uninformative: both $A_n$ and its perturbation, after intersection with a large bounding disk, have segment endpoints in the same convex position, so the result of an endpoint orientation test does not change when any collapsible triangle is reversed. Therefore, constructing line segment arrangements with both line and point orientation tests may require $\Omega(n^2\log n)$ time in the noisy setting.

\end{document}